\keywords{realizability topos, Scott model of $\lambda$-calculus, order-discrete objects, homotopy}
\newcommand{\mc}{\mathcal}
\newcommand{\msf}{\mathsf}
\newcommand{\mb}{\mathbb}
\newcommand{\graph}{\textup{graph}}
\DeclarePairedDelimiter{\pair}{\langle}{\rangle}
\DeclarePairedDelimiter{\pint}{[}{]}
\DeclarePairedDelimiter{\tint}{\llbracket}{\rrbracket}
\newcommand{\upset}{\uparrow\!}
\DeclareMathOperator{\id}{\textup{id}}
\begin{document}
\title{The Sierpinski Object in the Scott Realizability Topos}
\author{Tom de Jong}
\address{School of Computer Science, University of Birmingham}
\email{t.dejong@pgr.bham.ac.uk}

\author{Jaap van Oosten}
\address{Department of Mathematics, Utrecht University}
\email{j.vanoosten@uu.nl}
\thanks{Corresponding author: Jaap van Oosten}

\begin{abstract} We study the Sierpinski object $\Sigma$ in the realizability topos based on Scott's graph model of the $\lambda$-calculus. Our starting observation is that the object of realizers in this topos is the exponential $\Sigma ^N$, where $N$ is the natural numbers object. We define order-discrete objects by orthogonality to $\Sigma$. We show that the order-discrete objects form a reflective subcategory of the topos, and that many fundamental objects in higher-type arithmetic are order-discrete. Building on work by Lietz, we give some new results regarding the internal logic of the topos. Then we consider $\Sigma$ as a dominance; we explicitly construct the lift functor and characterize $\Sigma$-subobjects. Contrary to our expectations the dominance $\Sigma$ is not closed under unions. In the last section we build a model for homotopy theory, where the order-discrete objects are exactly those objects which only have constant paths.\end{abstract}
\maketitle
\section{Introduction} In this paper, we aim to revive interest in what we call the {\em Sierpinski object\/} in the {\em Scott realizability topos}. We show that it is of fundamental importance in studying the subcategory of {\em order-discrete\/} objects (section 3), arithmetic in the topos (section 4), the Sierpinski object as a dominance (section 5) and a notion of homotopy based on it (section 6). Section 2 deals with preliminaries and establishes notation.

\section{Preliminaries}
Independently observed by G.\ Plotkin and D.S.\ Scott (\cite{ScottD:lanl,ScottD:dattl,PlotkinG:settda}), there is a combinatory algebra structure (in fact, a $\lambda$-model structure) on the power set ${\mc P}(\mb{N})$ of the natural numbers.

Identifying ${\mc P}(\mb{N})$ with $\prod_{n\in\mb{N}}\{ 0,1\}$ and giving $\{ 0,1\}$ the Sierpinski topology (with $\{ 1\}$ the only nontrivial open set), we endow ${\mc P}(\mb{N})$ with the product topology. Concretely, basic opens of ${\mc P}(\mb{N})$ are of the form
$${\mc U}_p\; =\; \{ U\subseteq\mb{N}\, |\, p\subseteq U\}$$
for $p$ a finite subset of $\mb{N}$. A function $F:{\mc P}(\mb{N})\to {\mc P}(\mb{N})$ is continuous for this topology if and only if it has the property that $$n\in F(U)\text{ if and only if for some finite $p\subseteq U$, }n\in F(p).$$
Note, that every continuous function automatically preserves the inclusion $\subseteq$. 

Such continuous maps are encoded by elements of ${\mc P}(\mb{N})$ in the following way. Consider the following coding of finite subsets of $\mb{N}$:
$$e_n=\{ k_1,\ldots ,k_m\}\text{ if and only if }n=2^{k_1}+\cdots +2^{k_m}.$$
By convention, $e_0=\emptyset$. So: $e_1=\{ 0\}$, $e_2=\{ 1\}$, $e_3=\{ 0,1\}$, $e_{2^n}=\{ n\}$, etc.

Consider also a bijection $\langle {\cdot},{\cdot}\rangle :\mb{N}\times\mb{N}\to\mb{N}$, which we don't specify.

Now we define a binary operation (written as juxtaposition) on ${\mc P}(\mb{N})$:
$$UV=\{ n\, |\,\text{ for some }m, e_m\subseteq V\text{ and }\langle m,n\rangle\in U\}$$

For every function $F:{\mc P}(\mb{N})\to {\mc P}(\mb{N})$ we have
$${\sf graph}(F)\; =\; \{ \langle m,n\rangle\, |\, n\in F(e_m)\}$$
so that, when $F$ is continuous, ${\sf graph}(F)V=F(V)$ for all $V\in {\mc P}(\mb{N})$. Note that the operation $U,V\mapsto UV$ is a continuous map of two variables.

We think of the operation $U,V\mapsto UV$ as an {\em application operation}: apply the continuous function coded by $U$ to the argument $V$.

The structure ${\mc P}(\mb{N})$ together with this application function is denoted $\mc S$. The fact that $\mc S$ is a $\lambda$-model is worked out in detail in \cite{ScottD:dattl}. We shall use $\lambda$-terms to denote elements of $\mc S$ according to this interpretation. We use repeated juxtaposition and associate to the left when discussing iterated application: $VU_1\cdots U_n$ means $(\cdots ((VU_1)U_2)\cdots )U_n$. For continuous functions of several variables $F: {\mc P}(\mb{N})^n\to {\mc P}(\mb{N})$ we let
$${\sf graph}(F)\, =\, \{\langle m_1,\langle m_2,\langle\cdots m_n,k\rangle\cdots\rangle\rangle\, |\, k\in F(e_{m_1},\ldots ,,e_{m_n})\}$$
So with these conventions, ${\sf graph}(F)U_1\cdots U_n=F(U_1,\ldots ,U_n)$.

In $\mc S$, special combinators for important operations are available:
\begin{description}
\item[Pairing and unpairing combinators] ${\sf p},{\sf p}_0, {\sf p}_1\in {\mc S}$ satisfying ${\sf p}_0({\sf p}UV)=U$, ${\sf p}_1({\sf p}UV)=V$. In fact we can take
$$\{\langle 2^n,\langle 0,2n\rangle\rangle\, |\, n\in\mb{N}\}\cup\{\langle 0,\langle 2^n,2n+1\rangle\rangle\, |\, n\in\mb{N}\}$$
for $\sf p$, so that
$${\sf p}U\; =\; \{\langle 0,2n\rangle\, |\, n\in U\}\cup\{\langle 2^n,2n+1\rangle\, |\, n\in\mb{N}\}$$
and
$${\sf p}UV\; =\;\{ 2n\, |\, n\in U\}\cup\{ 2n+1\, |\, n\in V\} .$$
\item[Booleans] ${\sf t}=\{ 1\}$, ${\sf f}=\{ 0\}$. We have an ``if {\ldots } then {\ldots}, else{\ldots}'' operator $\sf q$, satisfying ${\sf qt}UV=U$ and ${\sf qf}UV=V$.
\item[Natural numbers] we write $\overline{n}$ for the element $\{ n\}$.
\item[Coding of tuples] we write $[U_0,\ldots ,U_n]$ for a standard coding of $n+1$-tuples in $\mc S$. This coding has the usual properties as given in \cite{OostenJ:reaics}: there are elements of $\mc S$ which give the length of a coded tuple, the $i$'th element (for appropriate $i$), and the code of the concatenation of two tuples.
\end{description}

As usual we have a {\em category of assemblies\/} over $\mc S$, denoted ${\sf Ass}({\mc S})$ and the {\em realizability topos\/} ${\sf RT}({\mc S})$, which we call the {\em Scott realizability topos} (unfortunately, the term ``Scott topos'' is already in use). We briefly introduce these categories.

An assembly over $\mc S$ is a pair $X = (|X|,E_X)$ where $|X|$ is a set and $E_X$ gives, for each $x\in |X|$, a nonempty subset $E_X(x)$ of $\mc S$. A morphism of assemblies $(|X|,E_X)\to (|Y|,E_Y)$ is a function $f:|X|\to |Y|$ for which there is some element $U$ of $\mc S$ which {\em tracks\/} $f$, meaning that for all $x\in |X|$ and all $V\in E_X(x)$ we have $UV\in E_Y(f(x))$.

The category ${\sf Ass}({\mc S})$ is a quasitopos. By way of example, let us construct exponentials explicitly: the exponent $(Y,E_Y)^{(X,E_X)}$ can be given as $(Z,E_Z)$ where $Z$ is the set of morphisms from $(X,E_X)$ to $(Y,E_Y)$ and $E_Z(f)$ is the set of those elements of $\mc S$ which track the morphism $f$. 

Actually, ${\sf Ass}({\mc S})$ has countable products: given a countable family $((X_n),E_n)_{n\in\mb{N}}$ of assemblies over $\mc S$, one has the assembly $(\prod_nX_n, E)$ where
$$E((x_n)_n)\; =\;\{U\, |\, U\overline{n}\in E_n(x_n)\text{ for all }n\} .$$

The category ${\sf Ass}({\mc S})$ has several subcategories of interest. First we recall that there is an adjunction
$${\rm Set}\stackrel{\nabla}{\to}{\sf Ass}({\mc S});\;\;\; {\sf Ass}({\mc S})\stackrel{\Gamma}{\to}{\rm Set};\;\;\; \Gamma\dashv \nabla$$
where $\Gamma (|X|,E_X)=|X|$, $\nabla (Y)=(Y, E)$ with $E(y)={\mc S}$ for all $y$.

An assembly $X$ is {\em partitioned\/} if $E_X(x)$ is always a singleton. It is not hard to see that the object $\nabla (Y)$ is isomorphic to $(Y,E)$ where $E(y)=\{\emptyset\}$ for all $y$, so all objects $\nabla (Y)$ are isomorphic to partitioned assemblies. Up to isomorphism, the partitioned assemblies are exactly the regular projective objects (the objects for which the representable functor into Sets preserves regular epimorphisms), which, in the topos ${\sf RT}({\mc S})$, coincide with the internally projective objects (the objects $X$ for which the endofuctor $(-)^X$ preserves epimorphisms); see \ref{characterizationinternallyprojective} below.

An assembly $X$ is {\em modest\/} if $E_X(x)\cap E_X(y)=\emptyset$ whenever $x\neq y$. The modest assemblies are, up to isomorphism, the ones for which the diagonal map $X\to X^{\nabla (2)}$ is an isomorphism (here, 2 denotes any two-element set); such objects are called {\em discrete\/} in \cite{HylandJ:disote}.

A special object of ${\sf Ass}({\mc S})$ is the {\em object of realizers\/} $S=({\mc S},E)$ where $E(U)=\{ U\}$; one sees that $S$ is both partitioned and modest. Every object which is both partitioned and modest is a regular subobject of $S$: that is, isomorphic to an object $({\mc A},E)$ where ${\mc A}\subseteq {\mc S}$ and $E$ is the restriction to $\mc A$ of the map defining $S$.

Another partitioned and modest object is the {\em natural numbers object\/} $N=(\mb{N},E_N)$, where $E_N(n)=\{\overline{n}\}$.

As Andrej Bauer observed in his thesis (\cite{BauerA:reaaca}), one can embed the category $T_0^{\rm cb}$ of countably based $T_0$-spaces into the subcategory of partitioned and modest assemblies over $\mc S$. To be precise, by ``countably based $T_0$-space'' we mean a $T_0$-space together with a chosen enumeration of a subbasis. If $(X,\{ S_n\, |\, n\in\mb{N}\} )$ is such, then the map $e:X\to {\mc S}$ given by
$$e(x)=\{ n\, |\, x\in S_n\}$$
defines an embedding of $X$ into $\mc S$. Moreover, given countably based spaces $X$ and $Y$ with associated embeddings $e:X\to {\mc S}, d:Y\to {\mc S}$, there is for every continuous map $f:X\to Y$ a continuous extension $F_f:{\mc S}\to {\mc S}$ satisfying $F_f{\circ}e=d{\circ}f$. This means that if we see the space $X$ with embedding $e:X\to {\mc S}$ as assembly $(X,E_X)$ where $E_X(x)=\{ e(x)\}$, then every continuous map $f:X\to Y$ is tracked (as a morphism of assemblies $(X,E_X)\to (Y,E_Y)$) by ${\sf graph}(F_f)$.

Finally, we have the {\em Scott realizability topos}, the realizability topos over $\mc S$, denoted ${\sf RT}({\mc S})$. Its objects are pairs $(X,\sim )$ where $X$ is a set and for $x,y\in X$, $[x\sim y]$ is a subset of $\mc S$ (so $\sim$ is a ${\mc P}({\mc S})$-valued relation on $X$), which satisfies certain conditions (we refer to \cite{OostenJ:reaics} for details). Every assembly $(X,E)$ is an object of ${\sf RT}({\mc S})$ if we define
$$[x\sim y]\; =\;\left\{\begin{array}{cl}E(x) & \text{if }x=y\\ \emptyset & \text{otherwise}\end{array}\right.$$
As subcategory of ${\sf RT}({\mc S})$, the category ${\sf Ass}({\mc S})$ is equivalent to the category of $\neg\neg$-stable objects of ${\sf RT}({\mc S})$, that is: the objects $(X,\sim )$ satisfying $\forall x,x':X.\neg\neg (x\sim x')\to x\sim x'$.
We shall not rehearse the structure of ${\sf RT}({\mc S})$ or its internal logic, based on realizability over $\mc S$, any further, as for this theory now standard texts are available. We only recall the following facts from \cite{OostenJ:reaics}, pp.135--7 (the proofs generalize to arbitrary pcas):
\begin{lem}\label{partitionedexponential}
    If $P$ is a partitioned assembly and $X$ is any object, then the exponential
    $X^P$ is isomorphic to the object $(|X|^{|P|},\approx)$ where 
    \[
        f \approx g = \pint{\forall p (E_P(p) \to f(p) \sim_X g(p))}.
    \]
\end{lem}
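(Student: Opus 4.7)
The plan is to verify that $(|X|^{|P|},\approx)$ satisfies the universal property of the exponential $X^P$ in $\textsf{RT}(\mc S)$. The key simplification is that, since $P$ is partitioned, each $p\in |P|$ admits a unique realizer $e_p\in E_P(p)$, so the generic functional-relation description of morphisms involving $P$ collapses to ordinary functions equipped with trackers.

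First, I would check that $(|X|^{|P|},\approx)$ is a well-defined object of $\textsf{RT}(\mc S)$: that is, that $\approx$ is symmetric and transitive in the realizability sense. The realizers needed are obtained by $\lambda$-abstracting the corresponding combinators for $\sim_X$ over an assumed realizer $q$ of $p$. For instance, from $s\in \mc S$ witnessing symmetry of $\sim_X$, one obtains $\lambda u.\,\lambda q.\, s(uq)$ witnessing symmetry of $\approx$; transitivity is analogous. Strictness is automatic because $[f\approx f]$ is already the set of tracking realizers used in the assembly-exponential formula of the preamble.

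Second, I would construct the evaluation morphism $\text{ev}\colon(|X|^{|P|},\approx)\times P \to X$ sending $(f,p)\mapsto f(p)$. A realizer of $(f,p)$ in the product encodes (up to pairing) a realizer $U$ of $f\approx f$ together with the unique realizer $e_p$ of $p$; then $U e_p$ realizes $f(p)\sim_X f(p)$, tracking $\text{ev}$. For the universal property, given any $h\colon Z\times P\to X$ tracked by some $r$, define the transpose $\tilde h\colon Z\to (|X|^{|P|},\approx)$ sending $z$ to the function $p\mapsto h(z,p)$, tracked by $\lambda u.\,\lambda q.\, r\langle u,q\rangle$. One then verifies the defining equation $\text{ev}\circ(\tilde h\times\text{id}_P)=h$ and the uniqueness of $\tilde h$, the latter being essentially forced because partitioned assemblies are internally projective (as noted in the preamble), so morphisms out of them are determined pointwise on the underlying set.

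The main technical burden is combinatorial bookkeeping rather than conceptual difficulty: one has to check that the pointwise definitions really produce morphisms in the topos (not merely set-level maps) and that the universal property holds with respect to the correct notion of equality of morphisms in $\textsf{RT}(\mc S)$. This reduces to exhibiting explicit $\mc S$-combinators for all the required symmetries, transpositions, and tracking conditions, which exist because $\mc S$ is a $\lambda$-model. The essential insight is that the general functional-relation presentation of exponentials simplifies dramatically when the source object has unique realizers, and unravelling this simplification yields exactly the formula $f \approx g = \tint{\forall p\,(E_P(p) \to f(p)\sim_X g(p))}$.
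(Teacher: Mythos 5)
First, a point of comparison: the paper does not actually prove Lemma~\ref{partitionedexponential}; it only recalls it from \cite{OostenJ:reaics}, pp.~135--137, so your attempt must be judged on its own merits. Its overall shape --- check that $\approx$ is a symmetric, transitive $\mc P(\mc S)$-valued relation, exhibit a tracked evaluation map, then transpose --- is reasonable, and the first two steps are essentially fine: the combinators you give do the job, and it is harmless that $|X|^{|P|}$ contains functions $f$ with $[f\approx f]=\emptyset$.

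The gap is in the verification of the universal property. You start from ``any $h\colon Z\times P\to X$ tracked by some $r$'' and transpose pointwise. But for an arbitrary object $Z$ of $\msf{RT}(\mc S)$ (and arbitrary target $X$), a morphism $Z\times P\to X$ is an equivalence class of functional relations and need not be induced by any function on underlying sets together with a tracker; that representation is available when the source is a partitioned assembly, and $Z\times P$ is not partitioned for general $Z$. So your transpose is defined only on a subclass of the morphisms you must handle, and uniqueness is likewise argued only at the level of set-functions (your appeal to internal projectivity of $P$ concerns morphisms out of $P$, whereas uniqueness of the transpose is a statement about two morphisms out of $Z$). Note also that, as written, nothing in your argument genuinely uses that $E_P(p)$ is a singleton --- every step would go through verbatim for an arbitrary assembly $P$ --- yet for non-partitioned assemblies the simple description fails and one needs the finer Proposition~\ref{assexponent}; this signals that the hypothesis must be spent exactly in the step you skipped. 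To repair the proof, either construct the transpose directly at the level of functional relations, where proving totality of the transposed relation is precisely where uniqueness of the realizer of each $p$ is used (with several realizers per $p$, the values produced by totality of $h$ could vary with the realizer, and no single value $f(p)$ need work uniformly), or reduce to partitioned $Z$ via a projective cover and a descent argument, in the style the paper itself uses to derive Proposition~\ref{assexponent} from this lemma.
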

\begin{lem}\label{characterizationinternallyprojective}
    An object is internally projective in ${\sf RT}({\mc S})$ if and only if it is isomorphic to a
    partitioned assembly.
\end{lem}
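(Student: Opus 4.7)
The plan is to prove the two implications separately.

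For the ``if'' direction, that partitioned assemblies are internally projective, I would exploit the concrete description of $X^P$ given in Lemma~\ref{partitionedexponential}. Given an epimorphism $g : Y \to Z$ in ${\sf RT}({\mc S})$, its internal surjectivity furnishes a realizer $R \in {\mc S}$ that, applied to any realizer of $z \in Z$, returns (the realizer of) some $y \in Y$ together with a realizer of $g(y) \sim_Z z$. Because $E_P(p)$ is a singleton for every $p \in P$, a tracker $K$ of a morphism $h : P \to Z$ may be evaluated on the unique realizer of each $p$; combining with $R$ yields, uniformly in $K$, a tracker for a pointwise lifting $\ell : P \to Y$ with $g \circ \ell = h$. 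This shows $g^P : Y^P \to Z^P$ is epi, so $(-)^P$ preserves epimorphisms.

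For the ``only if'' direction, suppose $X$ is internally projective. I would construct a canonical partitioned-assembly cover by setting $|P| = \{(x, a) : x \in |X|,\ a \in [x \sim_X x]\}$ with $E_P(x, a) = \{a\}$. The projection $f(x, a) = x$ is tracked by the identity combinator and is epi in ${\sf RT}({\mc S})$: the internal surjectivity statement is realized tautologically, since from a realizer $a$ of $x$ one simply produces the pair $(x, a) \in |P|$ with the same realizer $a$. By internal projectivity of $X$, the induced $f^X : P^X \to X^X$ is then epi. The terminal object $1$ is externally projective in ${\sf RT}({\mc S})$, since an epi $Y \to 1$ witnesses that $Y$ is internally inhabited, which in turn produces a realized element of $Y$ and hence a global section. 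Applying this to $f^X$ and the global element ${\rm id}_X : 1 \to X^X$ yields a section $s : X \to P$ of $f$ in the topos.

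The final step is to transport the singleton realizer structure on $P$ back to $X$ along $s$. Since $P$ is $\neg\neg$-separated, the morphism $s$ is represented on underlying sets by an honest function, still denoted $s$, satisfying $f \circ s = {\rm id}_X$ setwise. Defining $X' = (|X|, E')$ with $E'(x) = E_P(s(x))$ gives a partitioned assembly, and $f$ together with $s$ induce mutually inverse morphisms between $X$ and $X'$ in ${\sf RT}({\mc S})$.

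The main obstacle I anticipate is this last step: one must carefully justify that $s$ is representable by a genuine function on underlying sets, that $X'$ is a bona fide partitioned assembly, and that the induced arrows are really inverse in the topos (not merely on underlying sets). These are standard consequences of assemblies being precisely the $\neg\neg$-separated objects of ${\sf RT}({\mc S})$, but the bookkeeping warrants attention. The external projectivity of $1$ invoked in the lifting step is a folklore fact about realizability toposes that, strictly speaking, should be recorded alongside the argument.
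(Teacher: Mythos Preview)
The paper does not actually prove this lemma: it is stated alongside Lemma~\ref{partitionedexponential} as a fact recalled from \cite{OostenJ:reaics}, pp.~135--137, with the remark that the proofs there generalize to arbitrary pcas. So there is no in-paper proof to compare against; your proposal is essentially a reconstruction of the standard argument from that reference, and it is correct.

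A few comments on the details you flagged. Your justification that $1$ is externally projective, as written, only shows that epis with codomain $1$ split; to get full projectivity you implicitly use that epis are stable under pullback in a topos (pull the epi $f^X$ back along ${\rm id}_X:1\to X^X$). This is routine. For the ``transport'' step, rather than worrying about representing $s$ by an honest set-function and checking that $E'(x)=E_P(s(x))$ defines a partitioned assembly, it is cleaner to argue as follows: the section exhibits $X$ as a retract of $P$, so $X$ is isomorphic to the splitting of the idempotent $e=s\circ f:P\to P$. Since assemblies form a full subcategory of ${\sf RT}({\mc S})$, the morphism $e$ is a tracked endofunction on $P$, and its splitting may be computed in ${\sf Ass}({\mc S})$ as the sub-assembly on the fixed points of $e$ with the restricted (singleton-valued) realizer map. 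Idempotent splittings are absolute, so this agrees with the splitting in ${\sf RT}({\mc S})$, and you are done without any bookkeeping about how $s$ acts on underlying sets of a possibly non-separated $X$.
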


The following small proposition will be of use later on. Given an assembly $(X,E)$ let ${\int}(X,E)$ be $(X',E')$ where $X'=\{ (x,U)\, |\, U\in E(x)\}$ and $E'(x,U)=\{ U\}$. Consider also ${\int\!\!\!\int}(X,E)=(X'',E'')$ where $X''=\{ (x,U,V)\, |\, U,V\in E(x)\}$ and $E''(x,U,V)=\{{\sf p}UV\}$. Clearly, we have a projection ${\int}(X,E)\to (X,E)$ and two projections ${\int\!\!\!\int}(X,E)\to {\int}(X,E)$, and these maps form a coequalizer diagram in ${\sf Ass}({\mc S})$.

\begin{prop}\label{assexponent} Let $(Y,\sim )$ be an arbitrary object of ${\sf RT}({\mc S})$ and let $(X,E)$ be an assembly over $\mc S$. The exponential $(Y,\sim )^{(X,E)}$ can be presented as follows: its underlying set is the set of functions from $\{ (x,U)\, |\, U\in E(x)\}$ to $Y$. Given two such functions $f,g$, the set $[f\approx g]$ consists of those coded triples $[P,Q,R]$ (coded in the sense of $\mc S$) of elements of $\mc S$, which satisfy the following three conditions:\begin{enumerate}
\item for $U\in E(x)$, $PU\in [f(x,U )\sim f(x,U )]$;
\item for $U,U'\in E(x)$, $QUU'\in [f(x,U)\sim f(x,U')]$;
\item for $U\in E(x)$, $RU\in [f(x,U)\sim g(x,U)]$.\end{enumerate}\end{prop}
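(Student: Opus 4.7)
The plan is to derive the description by applying the contravariant exponential $(Y,\sim)^{(-)}$ to the coequalizer diagram ${\int\!\!\!\int}(X,E) \rightrightarrows {\int}(X,E) \to (X,E)$ noted immediately before the proposition. First I would justify that this coequalizer, stated in ${\sf Ass}({\mc S})$, persists in ${\sf RT}({\mc S})$: the projection ${\int}(X,E) \to (X,E)$ is a regular epimorphism (it is surjective on underlying sets and the identity of $\mc S$ tracks it), and ${\int\!\!\!\int}(X,E)$ is its kernel pair---the realizer ${\sf p}UV$ encodes precisely a matching pair $U, V \in E(x)$---so the diagram is the coequalizer of a regular epimorphism with its kernel pair, hence remains a coequalizer in the topos.

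The functor $(Y,\sim)^{(-)}$ then yields an equalizer
$$(Y,\sim)^{(X,E)} \hookrightarrow (Y,\sim)^{{\int}(X,E)} \rightrightarrows (Y,\sim)^{{\int\!\!\!\int}(X,E)}.$$
Since ${\int}(X,E)$ and ${\int\!\!\!\int}(X,E)$ are partitioned assemblies, Lemma \ref{partitionedexponential} makes both exponentials on the right completely explicit. After $\lambda^\ast$-abstraction, $(Y,\sim)^{{\int}(X,E)}$ has as underlying set the functions $f$ on $\{(x,U) : U \in E(x)\}$ with $[f \approx g]$ realised by those $R$ for which $RU \in [f(x,U) \sim g(x,U)]$; the description of $(Y,\sim)^{{\int\!\!\!\int}(X,E)}$ is analogous, with realizers accepting the input ${\sf p}UV$.

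The final step is to compute this equalizer as a subobject of $(Y,\sim)^{{\int}(X,E)}$. Its underlying set is unchanged, and $[f \approx g]$ in the equalizer conjoins the equaliser witness at $f$ (which, after $\lambda^\ast$-abstraction of the ${\sf p}UV$ pattern, is exactly the $Q$ of condition 2), the self-relatedness of $f$ coming from the subobject's existence predicate (supplying the $P$ of condition 1), and the relation $f \approx g$ inherited from $(Y,\sim)^{{\int}(X,E)}$ (giving the $R$ of condition 3). The corresponding data for $g$ are derivable from $P,Q,R$ using the symmetry and transitivity of $\sim_Y$ together with standard combinators of $\mc S$, so the triple $[P,Q,R]$ alone suffices as a realizer.

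The main subtlety is the bookkeeping of realizer codes: reconciling the input pattern ${\sf p}UV$ coming out of Lemma \ref{partitionedexponential} with the two-input form $QUU'$ of condition 2, and observing that condition 1 is in fact derivable from condition 2 by specialising to $U = U'$, so the $P$ component is included for transparency rather than strict necessity.
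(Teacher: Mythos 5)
Your proposal is correct and follows exactly the route of the paper's own (very terse) proof: apply the contravariant functor $(Y,\sim)^{(-)}$ to the coequalizer ${\int\!\!\!\int}(X,E)\rightrightarrows{\int}(X,E)\to(X,E)$ and compute the two outer exponentials via Lemma~\ref{partitionedexponential}. You merely spell out details the paper leaves implicit (that the coequalizer of the regular epi with its kernel pair persists in ${\sf RT}({\mc S})$, and the interconversion between the ${\sf p}UV$-input realizers and the curried triple $[P,Q,R]$, with $P$ redundant), all of which is accurate.
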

\begin{proof} Observe that the objects ${\int}(X,E)$ and ${\int\!\!\!\int}(X,E)$ are partitioned assemblies, hence projective objects in ${\sf RT}({\mc S})$. Also note that the contravariant functor $(Y,\sim )^{(-)}$ sends coequalizers to equalizers. Finally, use Lemma~\ref{partitionedexponential}.\end{proof}

\section{The Sierpi\'nski object and order-discrete objects}
\begin{defi}
    The \emph{Sierpi\'nski object} $\Sigma$ is the image of Sierpi\'nski space
    under the embedding $T_0^{\text{cb}} \to \msf{RT}(\mc S)$. Concretely,
    $\Sigma$ is the modest set $(\{0,1\},E_\Sigma)$ with $$E_\Sigma(0) =
    \{\emptyset\} \text{ and } E(1) = \{\{1\}\} .$$
\end{defi}
\begin{defi}
    The \emph{object of realizers} $S$ is the modest set $(\mc S,\{-\})$. 
\end{defi}

The two objects above are related, as follows.
\begin{prop}\label{realizersisexponential}
    The object of realizers is isomorphic to the exponential $\Sigma^N$, where
    $N$ is the natural numbers object of $\msf{RT}(\mc S)$.
\end{prop}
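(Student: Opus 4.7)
The plan is to use Lemma \ref{partitionedexponential} to describe $\Sigma^N$ explicitly, and then to produce mutually inverse tracked maps between $S$ and $\Sigma^N$.

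Because $N$ is a partitioned assembly, Lemma \ref{partitionedexponential} presents $\Sigma^N$ as the object with underlying set $\{0,1\}^{\mb{N}}$ and PER in which $U$ realizes $f \approx g$ iff $U\overline{n} \in \pint{f(n) \sim_\Sigma g(n)}$ for every $n$. Since $\Sigma$ is modest this forces $f = g$, and a realizer for $f \approx f$ is precisely a $U \in \mc{S}$ with $U\{n\} = \{1\}$ when $f(n) = 1$ and $U\{n\} = \emptyset$ when $f(n) = 0$. I would then observe that every $f$ is realized: setting $A_f := f^{-1}(1)$, the continuous function $V \mapsto \{1 \mid V \cap A_f \neq \emptyset\}$ on $\mc{S}$ has a graph realizing $f$. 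This identifies the existing elements of $\Sigma^N$ with $\mc{P}(\mb{N}) = |S|$ via the bijection $f \leftrightarrow A_f$.

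To promote this bijection to an isomorphism in $\msf{RT}(\mc{S})$, I would exhibit realizers in both directions. For $\phi: S \to \Sigma^N$ sending $A$ to the characteristic function $\chi_A$, I would take $P$ to be the graph of the continuous two-variable map $(V, W) \mapsto \{1 \mid V \cap W \neq \emptyset\}$, so that $(PA)\{n\} = \{1\}$ iff $n \in A$, and hence $PA$ indeed realizes $\chi_A$. For $\phi^{-1}: \Sigma^N \to S$ sending $f$ to $A_f$, I would take $Q$ to be the graph of $U \mapsto \{n \mid 1 \in U\{n\}\}$, an endomap of $\mc{S}$ that is continuous because $U\{n\}$ depends on $U$ only through its elements of the form $\langle 0, \cdot\rangle$ and $\langle 2^n, \cdot\rangle$; then $QU = A_f$ whenever $U$ realizes $\chi_{A_f}$. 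The only real work is verifying these continuity and realizer claims, which is routine; I do not foresee a conceptual obstacle, as the entire content of the proposition is the clean translation of ``indicator functions of subsets of $\mb{N}$'' into ``elements of $\mc{S}$ together with application''.
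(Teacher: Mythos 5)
Your proposal is correct and follows essentially the same route as the paper: identify $\Sigma^N$ (via the partitioned-assembly exponential) with the assembly of trackers of maps $\mb N\to\{0,1\}$, and then exhibit continuous trackers for the characteristic-function bijection $U\mapsto\chi_U$ and its inverse $f\mapsto f^{-1}(1)$. Your concrete trackers differ only cosmetically from the paper's (e.g.\ the graph of $(V,W)\mapsto\{1\mid V\cap W\neq\emptyset\}$ in place of $U\mapsto\{\pair{2^n,1}\mid n\in U\}$, and $\{n\mid 1\in U\overline n\}$ in place of $\{n\mid U\overline n=\overline 1\}$), and the verifications go through as you indicate.
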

\begin{proof}
    We prove that the underlying set of $\Sigma^N$ is $\{0,1\}^{\mb
    N}$. Since any morphism from $N$ to $\Sigma$ is in particular a function
    from $\mb N$ to $\{0,1\}$, one inclusion is clear. Conversely, if $f\colon
    \mb N\to \{0,1\}$, then $f$ is tracked by $\graph(F)$, where $F\colon \mc S
    \to \mc S$ is the continuous function defined by:
    \[
        \overline n \mapsto 
        \begin{cases}
            \overline 1 &\text{if } f(n) = 1; \\
            \emptyset   &\text{if } f(n) = 0.
        \end{cases}
    \]
    Thus, $\Sigma^N$ is the assembly $(\{0,1\}^{\mb N}, E)$ where $E(f)\subseteq
    \mc S$ is the set of trackers of the function $f$ (note, that this set is always nonempty). 
    
    We have a canonical bijection
    \[    
        \chi \colon \mc S \to \{0,1\}^{\mb N}, \quad U \mapsto \chi_U 
    \]
    (where $\chi_U$ is the characteristic function of $U$) with inverse
    \[
        \chi^{-1}\colon \{0,1\}^{\mb N} \to \mc S, \quad f \mapsto
        f^{-1}(\{1\}).
    \]
    
    It remains to prove that these functions are tracked. For $\chi$, consider
    the continuous function $F\colon \mc S \to \mc S$ given by
    $
        U \mapsto \{\pair{2^n,1}\mid n\in U\}.
    $
    Then $\chi$ is tracked by graph($F$). Indeed, if $U\in\mc S$ and $n\in\mb
    N$, then $\graph(F)U\overline n = F(U)\overline n$ and this is $\overline 1$
    if $n\in U$ and $\emptyset$ otherwise.

    For the inverse of $\chi$, we define $G\colon \mc S \to \mc S$ continuous by
    $U\mapsto \{n\in\mb N \mid U\overline n = \overline 1\}$. (This is
    continuous, because the application of $\mc S$ is continuous.) We claim that
    $\chi^{-1}$ is tracked by $\graph(G)$.
    Indeed, for $f \in \{0,1\}^\mb N$ and $U \in E(f)$, we have $n \in G(U)$ if
    and only if $f(n) = 1$, since $U$ tracks $f$.
\end{proof}

\subsection{Order-discrete objects}
\begin{defi}
    An object $X$ of $\msf{RT}(\mc S)$ is \emph{order-discrete} if it is
    orthogonal to $\Sigma$, viz.\ the diagonal $X \xrightarrow{\delta_X} X^\Sigma$
    is an isomorphism.
\end{defi}

We wish to characterize the order-discrete objects in terms of their realizers.
To this end, we first present some useful lemmas.
\begin{lem}\label{deltaepic}
    An object $X$ is order-discrete if and only if the diagonal $\delta_X$ is an
    epimorphism.
\end{lem}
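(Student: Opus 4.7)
The plan is to reduce the statement to the elementary categorical fact that a split monomorphism which is also an epimorphism must be an isomorphism. The forward direction is immediate, since any isomorphism is an epimorphism, so all the work lies in the converse.

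For the converse I would first exhibit a retraction of $\delta_X$. The key observation is that $\Sigma$ has a global point $\top\colon 1 \to \Sigma$ picking out the element $1$, since $\{1\} \in E_\Sigma(1)$ and the constant map with value $\{1\}$ is continuous and hence trackable. Transposing $\top$ along the product--exponential adjunction yields an evaluation morphism $\mathrm{ev}_\top\colon X^\Sigma \to X$, and by the definition of $\delta_X$ as the transpose of the first projection $X \times \Sigma \to X$ one has $\mathrm{ev}_\top \circ \delta_X = \id_X$. Hence $\delta_X$ is a split monomorphism.

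Assume now in addition that $\delta_X$ is an epimorphism. From $\mathrm{ev}_\top \circ \delta_X = \id_X$ one obtains $\delta_X \circ \mathrm{ev}_\top \circ \delta_X = \delta_X = \id_{X^\Sigma} \circ \delta_X$, and cancelling the epimorphism $\delta_X$ on the right yields $\delta_X \circ \mathrm{ev}_\top = \id_{X^\Sigma}$. So $\delta_X$ and $\mathrm{ev}_\top$ are mutually inverse, and $X$ is order-discrete.

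There is essentially no obstacle here; the only point requiring a moment of verification is that $\top\colon 1 \to \Sigma$ is genuinely a morphism in $\msf{RT}(\mc S)$, which is immediate from the realizability structure of $\Sigma$. In fact the same argument establishes the analogous statement for orthogonality with respect to any pointed object in any cartesian closed category.
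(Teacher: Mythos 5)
Your argument is correct. The forward direction is trivial, the retraction $\mathrm{ev}_\top \circ \delta_X = \id_X$ does hold (evaluation at the global point $\top$ of $\Sigma$, which is indeed a morphism in $\msf{RT}(\mc S)$), and the cancellation step showing that a split monomorphism which is also an epimorphism is an isomorphism is standard and correctly carried out. The paper argues slightly differently: it observes that $!_\Sigma\colon \Sigma \to 1$ is an epimorphism (again because $\Sigma$ is well-supported), deduces that $\delta_X = X^{!_\Sigma}$ is \emph{monic} — which uses the topos-theoretic fact that the contravariant functor $X^{(-)}$ turns epis into monos, resting on pullback-stability of epimorphisms — and then concludes by balancedness of the topos (mono plus epi implies iso). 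Both proofs exploit the same feature of $\Sigma$, namely that it has a global point, but yours upgrades ``mono'' to ``split mono'' and thereby replaces the two topos-specific ingredients (stability of epis, balancedness) by a purely categorical cancellation argument; as you note, this makes the lemma valid for orthogonality against any pointed object in any cartesian closed category, whereas the paper's formulation is shorter given that one is already working inside a topos. Either route is perfectly adequate here.
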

\begin{proof}
    Of course, $\delta_X$ is epic if $X$ is order-discrete. Conversely, suppose
    $\delta_X$ is an epi. Clearly, the unique morphism $!_\Sigma \colon \Sigma
    \to 1$ is an epimorphism. Hence, the diagonal $\delta_X = X^{!_\Sigma}$ is
    monic. Thus, $\delta_X$ is an iso, as desired.
\end{proof}
\begin{prop}\label{characterizationorder-discrete}
    An object $X$ is order-discrete if and only if there is $A\in\mc S$ such that
    for any $x,x' \in |X|$: if $U\in [x\sim_X x]$ and $V\in [x' \sim_X x']$ with
    $U\subseteq V$, then $AUV \in [x\sim_X x']$.
\end{prop}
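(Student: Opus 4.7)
The plan is to apply Lemma~\ref{deltaepic} and reduce to showing that $\delta_X$ is epic if and only if the specified $A$ exists. Both directions rest on the description of $X^\Sigma$ given by Proposition~\ref{assexponent}: its underlying set is $|X|^2$ (indexed by the points $(0,\emptyset)$ and $(1,\{1\})$ of $\Sigma$), and any realizer $[P,Q,R]$ of the self-equality $(a,b)\approx(a,b)$ satisfies $P\emptyset\in[a\sim_X a]$ and $P\{1\}\in[b\sim_X b]$ by condition~(1). Because application in $\mc S$ is monotone and $\emptyset\subseteq\{1\}$, every such $P$ automatically yields $P\emptyset\subseteq P\{1\}$; this is the origin of the inclusion $U\subseteq V$ that appears in the statement.

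For the backward direction, suppose such an $A$ exists. Given $W=[P,Q,R]$ realizing $(a,b)\approx(a,b)$, set $U:=P\emptyset\in[a\sim a]$ and $V:=P\{1\}\in[b\sim b]$, so $U\subseteq V$. Two applications of the hypothesis give $AUU\in[a\sim a]$ (take $x=x'=a$, $U=V$) and $AUV\in[a\sim b]$; joint monotonicity of application then forces $AUU\subseteq AUV$. The existential $\exists x.\,\delta_X(x)\approx(a,b)$ is then witnessed by taking $x:=a$, producing the pair $\pair{AUU,[P',Q',R']}$, where $R':=\{\pair{0,n}\mid n\in AUU\}\cup\{\pair{2,n}\mid n\in AUV\}$ so that $R'\emptyset=AUU$ and $R'\{1\}=AUV$, and where $P'$ and $Q'$ return $AUU$ on any relevant input. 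The whole construction is expressible as a single $\lambda$-term in $W$, furnishing a tracker for the epicness of $\delta_X$.

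For the forward direction, suppose $\delta_X$ is epic with tracker $t$. Given $U\in[a\sim a]$, $V\in[b\sim b]$ with $U\subseteq V$, construct a realizer $W=[P,Q,R]$ of $(a,b)\approx(a,b)$ by setting $P:=\{\pair{0,n}\mid n\in U\}\cup\{\pair{2,n}\mid n\in V\}$ (so that $P\emptyset=U$, $P\{1\}=V$), with $R$ defined analogously and $Q$ chosen to satisfy condition~(2) on the equal pairs $(\emptyset,\emptyset)$ and $(\{1\},\{1\})$. Applying $t$ yields a pair $\pair{\alpha,\beta}$ witnessing $\exists x.\,\delta_X(x)\approx(a,b)$; extracting the third component $R'$ of $\beta$ gives $R'\emptyset\in[x\sim a]$ and $R'\{1\}\in[x\sim b]$ for some $x\in|X|$, and combining these via the symmetry and transitivity trackers for $\sim_X$ produces a realizer of $[a\sim b]$, uniformly in $U$ and $V$.

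The main obstacle is the backward direction, specifically ensuring that the constructed $R'$ is a genuine element of $\mc S$: the formula for application automatically forces $R'\emptyset\subseteq R'\{1\}$, so one needs a \emph{nested} pair of realizers for $[a\sim a]$ and $[a\sim b]$, which the hypothesis on $A$ does not obviously deliver on its own. The key point is that $AUU$ and $AUV$ do form such a nested pair, thanks to the joint monotonicity of application in $\mc S$ and the hypothesis $U\subseteq V$.
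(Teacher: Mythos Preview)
Your argument is correct and follows the same strategy as the paper: reduce via Lemma~\ref{deltaepic} to epicness of $\delta_X$, describe $X^\Sigma$ explicitly, and exploit monotonicity of application in $\mc S$ to get the crucial inclusion between realizers. The one genuine difference is that you invoke Proposition~\ref{assexponent} (yielding triples $[P,Q,R]$) whereas the paper uses Lemma~\ref{partitionedexponential}, which applies because $\Sigma$ is partitioned and gives the simpler description $[f\approx g]=\bigl[\forall p\,(E_\Sigma(p)\to f(p)\sim_X g(p))\bigr]$, i.e.\ a single $F$ with $F\emptyset\in[f(0)\sim g(0)]$ and $F\overline{1}\in[f(1)\sim g(1)]$. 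This lets the paper avoid the bookkeeping with $P',Q',R'$: in the backward direction one simply takes the graph of $\emptyset\mapsto A(F\emptyset)(F\emptyset)$, $W\neq\emptyset\mapsto A(F\emptyset)(F\overline{1})$, and in the forward direction one builds a single $GUV$ with $GUV\emptyset=U$, $GUV\overline{1}=V$. Your route works, but using the partitioned-exponential lemma would streamline it; your ``main obstacle'' observation about needing $AUU\subseteq AUV$ is exactly the continuity check the paper is implicitly making.
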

\begin{proof}
    Suppose first that $X$ is order-discrete. We construct the desired element
    $A \in \mc S$. Assume we have $x,x'\in |X|$ and $U\in [x\sim_X x], V\in [x'
    \sim_X x']$ with $U\subseteq V$. 
    By Lemma~\ref{partitionedexponential} we have $X^\Sigma \cong
    (|X|^{\{0,1\}},\approx)$. Define $f \colon \{0,1\} \to |X|$ by $f(0) = x$
    and $f(1) = x'$. The map $H \colon \mc S^2 \to \mc S$ defined as 
    \[
        (W,W') \mapsto \{\pair{0,n} \mid n \in W\} \cup \{\pair{2,n} \mid n \in
        W' \}
    \]
    is easily seen to be continuous. Write $G = \graph(H)$. We claim that $GUV
    \in [f \approx f]$. Indeed, $GUV = \{\pair{0,n} \mid n \in U\} \cup
    \{\pair{2,n} \mid n \in V\}$, so that $GUV \emptyset = U \in [f(0) \sim_X
    f(0)]$. Moreover, $GUV \overline 1 = U \cup V = V \in [f(1) \sim_X f(1)]$, since
    $e_0 = \emptyset$ and $e_2 = \overline 1$ and $U\subseteq V$. Thus, $GUV \in
    [f \approx f]$. Now let $R \in \mc S$ be a realizer of the fact that
    $\delta$ is epic. Then $R(GUV)$ is an element of $[\forall p (E_\Sigma(p)
    \to x_0 \sim_X f(p))]$ for some $x_0 \in |X|$. Thus, $R(GUV)\emptyset \in
    [x_0 \sim_X x]$ and $R(GUV)\overline 1 \in [x_0 \sim_X x']$. Finally, let
    $t,s\in \mc S$ respectively realize transitivity and symmetry of $\sim_X$.
    Then, we see that
    \[
        \lambda {u v}. t (\msf p (s (R(G u v)\emptyset))(R(G u v)\overline 1))
    \]
    is the desired element $A$.

    Conversely, suppose we have an $A\in\mc S$ as in the proposition. Let $f\colon
    \{0,1\}\to |X|$ be arbitrary. By Lemma~\ref{deltaepic}, it suffices to show that
    from an element of $[f \approx f]$, we can continuously find an $x\in |X|$
    and an element of $[\forall p(E_\Sigma(p) \to x \sim_X f(p))]$. Let $F\in
    [f\approx f]$. Then $F\emptyset \in [f(0) \sim_X f(0)]$, $F\overline 1 \in
    [f(1) \sim_X f(1)]$ and $F\emptyset \subseteq F\overline 1$, so
    $A(F\emptyset)(F\overline 1) \in [f(0) \sim_X f(1)]$ and
    $A(F\emptyset)(F\emptyset) \in [f(0) \sim_X f(0)]$. Hence, if we set $x=f(0)$,
    then the graph of the continuous function 
    \[
        \emptyset \mapsto A(F\emptyset)(F\emptyset),\quad W\neq\emptyset \mapsto
        A(F\emptyset)(F\overline 1)
    \]
    is the desired element. 
\end{proof}
\begin{cor}\label{orderdiscreteassembly}
    An assembly $X$ is order-discrete if and only if the existence of realizers
    $U\in E_X(x)$ and $V\in E_X(y)$ with $U\subseteq V$ implies that $x$ and $y$
    are equal.
\end{cor}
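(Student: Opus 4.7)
The plan is to deduce this corollary directly from Proposition~\ref{characterizationorder-discrete} by unpacking what the equality predicate $\sim_X$ looks like on an assembly. Recall that when $(X,E_X)$ is viewed as an object of $\msf{RT}(\mc S)$, we have $[x \sim_X y] = E_X(x)$ if $x = y$ and $[x \sim_X y] = \emptyset$ otherwise; in particular, $[x \sim_X x] = E_X(x)$ for every $x \in |X|$, and $[x \sim_X y]$ being inhabited forces $x = y$.

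For the forward direction, I would invoke Proposition~\ref{characterizationorder-discrete} to obtain an $A \in \mc S$ witnessing order-discreteness. Given $U \in E_X(x)$ and $V \in E_X(y)$ with $U \subseteq V$, we have $U \in [x \sim_X x]$ and $V \in [y \sim_X y]$, so the proposition yields $AUV \in [x \sim_X y]$. Since this set is inhabited, the above observation forces $x = y$.

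For the converse, suppose the implication of the corollary holds. I would simply take $A \in \mc S$ to be any realizer of $\abs*{uv} v$ (so that $AUV = V$ for all $U,V$). Then, given $x, x' \in |X|$ together with $U \in [x \sim_X x] = E_X(x)$ and $V \in [x' \sim_X x'] = E_X(x')$ satisfying $U \subseteq V$, the hypothesis gives $x = x'$, whence $V \in E_X(x') = [x \sim_X x']$. Thus $AUV = V \in [x \sim_X x']$, which is the condition of Proposition~\ref{characterizationorder-discrete}, and order-discreteness follows.

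There is no real obstacle here: the content of the corollary is just the translation of the sharper statement in Proposition~\ref{characterizationorder-discrete} through the degenerate form of the equality predicate on assemblies. The only minor care needed is to observe that on assemblies the second projection already serves as the witness $A$, so that the extra data provided by Proposition~\ref{characterizationorder-discrete} collapses to a purely set-theoretic implication.
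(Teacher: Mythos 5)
Your proposal is correct and follows the same route as the paper, which simply declares the corollary ``immediate'' from Proposition~\ref{characterizationorder-discrete}: you have merely made explicit the degenerate form of $\sim_X$ on assemblies and the choice of witness (the second projection) for the converse. No issues.
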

\begin{proof}
    Immediate.
\end{proof}
\begin{exa}
    The natural numbers object $N$ (c.f.\ the proof of
    \ref{realizersisexponential}) is easily seen to be order-discrete using the
    corollary above.
\end{exa}

Recall from the section on preliminaries that an object is called
\emph{discrete} if it is orthogonal to $\nabla(2)$. For an assembly $X$, this
means that the realizing sets are disjoint. Thus, Corollary~\ref{orderdiscreteassembly} shows that any
order-discrete assembly is also discrete (alternatively, one might observe that the map from $\Sigma$ to $\nabla (2)$ is an epi in ${\sf Ass}({\mc S})$, and therefore the map $X^{\nabla (2)}\to X^{\Sigma}$ is monic. From this one also easily deduces that order-discrete implies discrete, for assemblies). In fact, this holds for all objects,
not just assemblies.
\begin{prop}
    Any order-discrete object is discrete.
\end{prop}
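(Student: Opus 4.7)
The plan is to adapt the argument for Proposition~\ref{characterizationorder-discrete} to $\nabla(2)$ in place of $\Sigma$. (Note that one cannot simply invoke the epimorphism $\Sigma \to \nabla(2)$ of assemblies, because this map is \emph{not} epic in $\msf{RT}(\mc S)$: the realizer of a point of $\nabla(2)$ carries no information about which point it realizes.)

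First I would prove the analog of Lemma~\ref{deltaepic}: because $\nabla(2)$ is internally inhabited (any $y \in |\nabla(2)|$ is realized), the unique map $\nabla(2) \to 1$ is an epimorphism, so the diagonal $\delta_X \colon X \to X^{\nabla(2)}$ is automatically monic. Hence $X$ is discrete iff this diagonal is epic.

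Next I would apply Lemma~\ref{partitionedexponential} to a partitioned representative of $\nabla(2)$, namely $\{0,1\}$ with $E(p) = \{\emptyset\}$, to describe $X^{\nabla(2)}$ concretely: its underlying set is $|X|^{\{0,1\}}$, and $U \in [f \approx g]$ iff $U\emptyset$ simultaneously realizes both $f(0) \sim_X g(0)$ and $f(1) \sim_X g(1)$. The essential new feature, compared with $\Sigma$, is that a single element of $\mc S$ must witness both components.

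Then I would invoke Proposition~\ref{characterizationorder-discrete} to extract an element $A \in \mc S$ from order-discreteness. Given $f \colon \{0,1\} \to |X|$ and $U \in [f \approx f]$, set $W = U\emptyset$, which lies in both $[f(0) \sim_X f(0)]$ and $[f(1) \sim_X f(1)]$. Using the tautological inclusion $W \subseteq W$, Proposition~\ref{characterizationorder-discrete} yields $AWW \in [f(0) \sim_X f(0)]$ (taking $x = x' = f(0)$) and $AWW \in [f(0) \sim_X f(1)]$ (taking $x = f(0)$, $x' = f(1)$). Hence, setting $x = f(0)$ and $R = \lambda v.\, AWW$, the element $R$ realizes $\delta_X(x) \approx f$, and the passage $U \mapsto R$ is given uniformly by $\lambda u.\, \lambda v.\, A(u\emptyset)(u\emptyset)$ in $\mc S$. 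This shows $\delta_X$ is epic and hence iso.

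The only subtle point is recognising that a single realizer must witness both components of $f \approx g$ in the $\nabla(2)$ case, and that order-discreteness supplies one precisely via the tautology $W \subseteq W$; no further obstacle is anticipated.
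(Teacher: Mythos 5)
Your proof is correct and is essentially the argument the paper intends: the paper simply cites Proposition~\ref{characterizationorder-discrete} together with the realizer characterization of discrete objects (van Oosten, Corollary 3.2.20), and the combination of the two amounts exactly to your key step that a common realizer $W = U\emptyset$ of $f(0)\sim_X f(0)$ and $f(1)\sim_X f(1)$ satisfies the tautological inclusion $W\subseteq W$, so that $\lambda u.\,A(u\emptyset)(u\emptyset)$ witnesses that the diagonal $X\to X^{\nabla(2)}$ is epic, hence iso. The only difference is that you re-derive the needed direction of that cited characterization by hand, via Lemma~\ref{partitionedexponential} applied to the partitioned presentation $\nabla(2)\cong(\{0,1\},E(p)=\{\emptyset\})$, which makes the argument self-contained (and your parenthetical observation that $\Sigma\to\nabla(2)$ is epic in $\msf{Ass}(\mc S)$ but not in $\msf{RT}(\mc S)$ agrees with the paper's own remark).
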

\begin{proof}
    This follows from Proposition~\ref{characterizationorder-discrete} and \cite{OostenJ:reaics}, Corollary~3.2.20. The statement there carries over
    to arbitrary realizability toposes.
\end{proof}
\begin{rem}
    One might expect the availability of a simpler proof of the proposition
    above, i.e.\ a proof that does not rely on the characterization of the
    (order-)discrete objects. However, note that although we have a morphism
    $\Sigma \to \nabla(2)$, it is not a regular epimorphism, so one cannot apply
    \cite{HylandJ:disote}, Lemma 2.2.
\end{rem}

In \cite{HylandJ:disote}, it it shown that the class of objects orthogonal to an
object $A$ satisfies some nice properties if $A$ is well-supported and
internally projective. By Lemma~\ref{characterizationinternallyprojective}, the object $\Sigma$ is internally projective. It is clearly well-supported, which gives us the following proposition:
\begin{prop}\label{odreflective}
    The order-discrete objects are closed under all existing limits, subobjects
    and quotients in $\msf{RT}(\mc S)$. Moreover, they form an exponential
    ideal. Finally, they form a reflective subcategory of $\msf{RT}(\mc S)$.
\end{prop}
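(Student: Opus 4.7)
The plan is to deduce all four assertions from the abstract framework of \cite{HylandJ:disote}: for a well-supported, internally projective object $A$ in a realizability topos, the orthogonal class $A^\perp$ is closed under limits, subobjects and quotients, forms an exponential ideal, and is reflective. The first task is therefore to verify these two hypotheses for $A = \Sigma$. Well-supportedness is immediate, since the global element $1 \in |\Sigma|$ has realizer $\{1\}$ and thus splits the unique map $\Sigma \to 1$. Internal projectivity is equally easy: $\Sigma$ is literally a partitioned assembly, so Lemma~\ref{characterizationinternallyprojective} applies.

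The closure properties then follow by short abstract arguments. Limits: the functor $(-)^\Sigma$ is a right adjoint, hence preserves limits, so $\delta$ on a limit of order-discrete objects is the limit of isomorphisms. Subobjects and exponential-ideal-hood follow from naturality of $\delta$ together with the fact that $(-)^\Sigma$ preserves monomorphisms and satisfies the natural isomorphism $(X^Y)^\Sigma \cong (X^\Sigma)^Y$. Closure under quotients is precisely where internal projectivity of $\Sigma$ enters: it guarantees that $(-)^\Sigma$ preserves epimorphisms, so in the naturality square of $\delta$ along an epi out of an order-discrete object, a diagram chase forces $\delta$ on the quotient to be iso.

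Reflectivity is the only nontrivial part. The cleanest route is to invoke a special adjoint functor theorem: the full inclusion preserves all existing limits (as just shown), and the ambient topos is locally small and well-powered, so a left adjoint exists. Alternatively, one can follow the explicit construction of the reflector given in \cite{HylandJ:disote}. The main obstacle, and the reason the authors defer to that reference, is precisely this reflector construction; but the arguments there rely only on general properties of a realizability topos together with well-supportedness and internal projectivity of the test object, so they transfer verbatim to $\msf{RT}(\mc S)$.
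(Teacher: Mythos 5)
Your overall route is the same as the paper's: verify that $\Sigma$ is well-supported and internally projective (via Lemma~\ref{characterizationinternallyprojective}, since $\Sigma$ is a partitioned assembly) and then quote the orthogonality machinery of \cite{HylandJ:disote}. Your sketches of the closure properties are fine in substance; note only that closure under subobjects is not a consequence of naturality of $\delta$ plus mono-preservation alone --- it is exactly here that well-supportedness of $\Sigma$ is used (for $A=0$, which is internally projective but not well-supported, the orthogonal objects are the terminal ones and closure under subobjects fails), so you should route that step through well-supportedness or through Lemma 2.3 of \cite{HylandJ:disote}, as the paper does.

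The genuine problem is your ``cleanest route'' to reflectivity. The Special Adjoint Functor Theorem needs more than local smallness and well-poweredness of the ambient category: it needs the subcategory to be complete (equivalently here, the ambient category to be complete, since the subcategory is closed under limits) and to possess a small cogenerating set. Neither hypothesis is available: $\msf{RT}(\mc S)$, like all realizability toposes, is \emph{not} small-complete --- this is precisely why the proposition speaks of ``all \emph{existing} limits'' --- and it has no evident small cogenerating set (it is not a Grothendieck topos). Nor does the General Adjoint Functor Theorem apply without verifying a solution-set condition, and exhibiting such a solution set is essentially the same work as constructing the reflector. So the adjoint-functor shortcut does not go through; reflectivity has to come from the explicit reflector, i.e.\ your fallback: the corollary following Lemma 2.3 of \cite{HylandJ:disote}, whose argument uses only well-supportedness and internal projectivity and hence transfers to $\msf{RT}(\mc S)$ (the paper cites exactly this, and Remark~\ref{odreflection} gives a concrete description of the resulting reflection).
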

\begin{proof}
    The first statement is true, because any collection of orthogonal objects is
    closed under all existing limits (see \cite{HylandJ:disote}, p. 1). Similarly,
    any collection of orthogonal objects forms an exponential ideal. 
    The other closure properties follow from Lemmas 2.3 and 2.8 of \cite{HylandJ:disote}, and Lemma~\ref{characterizationinternallyprojective} above. The final claim is proven as a
    corollary right after Lemma 2.3 of \cite{HylandJ:disote}.
\end{proof}

\begin{rem}\label{odreflection} For comparison with the notion of homotopy to be defined below, we give a concrete representation of the order-discrete reflection of an arbitrary object $(X,\sim )$. It has the same underlying set $X$, and we let $[x\approx y]$ consist of those coded (in the sense of $\mc S$) $(3n+1)$-tuples $[U_0,V_0,W_0,\ldots ,U_{n-1},V_{n-1},W_{n-1},U_n]$ for which there is a sequence $(x_0,y_0,\ldots ,x_n,y_n)$ of elements of $X$, such that the following conditions hold:\begin{enumerate}\item $x_0=x$ and $y_n=y$;
\item $U_k\in [x_k\sim y_k]$ for $0\leq k\leq n$;
\item $V_k\in [y_k\sim y_k]$ and $W_k\in [x_{k+1}\sim x_{k+1}]$ for $0\leq k <n$;
\item $V_k\subseteq W_k$ or $W_k\subseteq V_k$ must hold.\end{enumerate}\end{rem}

\section{Arithmetic in ${\sf RT}({\mc S})$}
The order-discrete modest sets already make a disguised appearance in
\cite{LietzP:comrpo}. Herein, Lietz singles out a class of so-called
well-behaved modest sets. It is not hard to show that a modest set is
well-behaved if and only if it is order-discrete and moreover, it has the
join-property, which we define now.
\begin{defi}
    An assembly has the \emph{join-property} if it is isomorphic to an
    assembly $X$ whose realizing sets are closed under binary joins, viz.\ if
    $U \in E_X(x)$ and $V \in E_X(x)$, then $U \cup V \in E_X(x)$ for all $x \in
    |X|$.
\end{defi}
\begin{exa}
    The natural numbers object $N$ has the join-property.
\end{exa}
The following proposition is a very slight generalization of \cite{LietzP:comrpo}, Theorem
2.3.
\begin{prop}
    Let $X$ and $Y$ be modest sets. If $X$ has the join-property and $Y$ is
    order-discrete, then the following form of the Axiom of Choice holds in
    $\msf{RT}(\mc S)$\textup{:}
    \[
        \forall x{:}X\exists y{:}Y \phi(x,y) \to \exists f{:}Y^X \forall x{:}X
        \phi(x,f(x)).
    \]
\end{prop}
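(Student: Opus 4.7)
The approach is to read off the choice function directly from a realizer of the antecedent, using the join-property of $X$ together with order-discreteness of $Y$ to force its values to be independent of the choice of realizer. Let $e\in\mc S$ realize $\forall x{:}X\exists y{:}Y\,\phi(x,y)$. Since $Y$ is modest, for every $x\in|X|$ and every $U\in E_X(x)$ the element $\msf{p}_0(eU)$ lies in $E_Y(y)$ for a unique $y\in|Y|$, which I denote $y_{x,U}$; and $\msf{p}_1(eU)\in[\phi(x,y_{x,U})]$. I then set $f(x):=y_{x,U}$ for any chosen $U\in E_X(x)$.

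The crucial step is to check that $y_{x,U}$ does not depend on the choice of $U$. Given $U,U'\in E_X(x)$, the join-property provides $U\cup U'\in E_X(x)$ as well. The map $V\mapsto \msf{p}_0(eV)$ is a composition of continuous functions on $\mc S$ and therefore monotone with respect to $\subseteq$, so $\msf{p}_0(eU)\subseteq \msf{p}_0(e(U\cup U'))$, and likewise with $U'$ in place of $U$. Applying Corollary~\ref{orderdiscreteassembly} to $Y$ twice yields $y_{x,U}=y_{x,U\cup U'}=y_{x,U'}$, so $f$ is a well-defined function $|X|\to|Y|$. This is the only place where both hypotheses are used, and it is the main obstacle: the join-property converts ``two different realizers for the same $x$'' into an inclusion of output realizers, and order-discreteness of $Y$ is exactly what turns such an inclusion into equality of the underlying elements.

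What remains is assembling trackers. Define $P:=\lambda^{\ast}\!U.\,\msf{p}_0(eU)$; then for every $U\in E_X(x)$ one has $PU\in E_Y(f(x))$. By Proposition~\ref{assexponent}, the coded triple $[P,\,\lambda^{\ast}\!UU'.\,PU,\,P]$ presents $f$ (viewed as the function $(x,U)\mapsto f(x)$) as an element of $Y^X$ in $\msf{RT}(\mc S)$. Similarly, $S:=\lambda^{\ast}\!U.\,\msf{p}_1(eU)$ sends every $U\in E_X(x)$ into $[\phi(x,f(x))]$ and so realizes $\forall x{:}X\,\phi(x,f(x))$. Pairing the code of $f$ with $S$ using $\msf p$ yields the required realizer of $\exists f{:}Y^X\forall x{:}X\,\phi(x,f(x))$. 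Finally, abstracting over $e$ gives a realizer in $\mc S$ of the whole implication, uniformly in choices of $x$ and $U$.
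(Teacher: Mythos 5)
Your proof is correct, and it is essentially the argument the paper delegates to Lietz's Theorem 2.3: extract the witness from a realizer of the antecedent, use monotonicity of application together with the join-property of $X$ and Corollary~\ref{orderdiscreteassembly} for $Y$ to see the witness is independent of the chosen realizer, and then assemble trackers uniformly. The only cosmetic point is that the join-property holds only up to isomorphism, so one should say explicitly that $X$ may be replaced by an isomorphic assembly whose realizing sets are closed under binary unions; this does not affect the validity of the argument.
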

\begin{proof}
    See the proof of \cite{LietzP:comrpo}, Theorem 2.3.
\end{proof}
We have already seen that the order-discrete objects form an exponential ideal
and are closed under binary products. In \cite{LietzP:comrpo}, Theorem 2.2, it is
shown that this also holds for the modest objects that have the join-property.
Since $N$ is order-discrete and has the join-property, it follows that the Axiom
of Choice holds for all finite types. Consequently, the following two principles
\begin{alignat*}{2}
    &\textup{}\quad &&\forall f{:} N^N \exists x{:} N \phi(f,x) \to \forall
    f{:} N^N \exists xy {:} N \forall g{:} N^N (\overline f y = \overline g y
    \to \phi(g,x)), \\
    & &&\text{where $\overline f y = \overline g y$ is short for $\forall z {:}
    N (z < y \to f(z) = g(z))$} \\
    & &&\textup{(\emph{Weak Continuity for Numbers})} \\
    &\textup{}\quad &&\forall F {:} N^{(N^N)} \forall f {:} N^N \exists x
    {:} N \forall g{:} N^N (\overline f x = \overline g x \to F(f) = F(g)) \\
    & &&\textup{(\emph{Brouwer's Principle})}
\end{alignat*}
are equivalent (and false) in $\msf{RT}(\mc S)$; c.f.\
\cite{LietzP:comrpo}, Section 2.3.

\subsection{Second-order arithmetic in ${\sf RT}({\mc S})$}
We close this section by some brief considerations on second-order arithmetic in
realizability toposes and $\msf{RT}(\mc S)$ in particular. 
\begin{lem}\label{simplification}
    Let $A$ be any partial combinatory algebra. Write $\overline n$ for the
    $n$-th Curry numeral in $A$ and let ${\msf{p}}$ be a pairing combinator in
    $A$ with projections ${\msf{p}}_0$ and ${\msf{p}}_1$.
    
    The power object ${\mc P} (N)$ of
    the natural numbers object $N$ in ${\sf RT}(A)$ can be given as the object $(P,\approx)$
    where $P$ is the set
    \[
        P = \{\phi \colon \mb N \to \mc P(A) \mid \text{for all $n \in \mb
        N$, if $a \in \phi(n)$, then ${\msf{p}}_1 a = \overline n$}\}
    \]
    and
    \[
        \psi \approx \psi = [\forall n(\phi(n) \leftrightarrow \psi(n))].
    \]
\end{lem}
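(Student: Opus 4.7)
The plan is to obtain $(P, \approx)$ from the standard realizability-tripos construction of the power object. For any $X = (|X|, \sim_X)$ in $\msf{RT}(A)$, the power object $\mc P(X)$ can be presented by functions $\phi : |X| \to \mc P(A)$ under an equivalence $[\phi \approx \psi]$ that internalizes strictness and extensionality of both sides together with their pointwise equivalence. Specializing to $X = N$: extensionality is vacuous, since $[n \sim_N m] = \emptyset$ when $n \neq m$, so the only nontrivial condition is strictness, which amounts to the existence of a uniform $s \in A$ with $sa = \overline n$ whenever $a \in \phi(n)$.

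First I would observe that every $\phi \in P$ is automatically strict, with $\msf{p}_1$ itself acting as the uniform strictness realizer. This yields an inclusion morphism $\iota : (P, \approx) \to \mc P(N)$, and on $P$ the strictness conjuncts contributing to the standard $\approx$ become uniformly realizable in a trivial way, so the restriction of the standard equivalence to $P$ reduces to $[\forall n (\phi(n) \leftrightarrow \psi(n))]$. For the inverse, given a strict $\phi \in \mc P(A)^{\mb N}$ with strictness realizer $s$, set
\[
    \phi^\ast(n) = \{\msf{p} a (sa) \mid a \in \phi(n)\}.
\]
Then $\phi^\ast \in P$, since $\msf{p}_1(\msf{p} a (sa)) = sa = \overline n$, and $\phi \approx \phi^\ast$ in $\mc P(N)$ with trackers built from $\lambda a.\,\msf{p} a(sa)$ in one direction and from $\msf{p}_0$ in the other.

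The main obstacle is straightforward realizer bookkeeping: one must uniformly assemble the various trackers from $\msf{p}$, $\msf{p}_0$, $\msf{p}_1$, $s$, and the tripos-structure maps, and check they compose correctly to witness that $\iota$ is an isomorphism in $\msf{RT}(A)$. Once this is done, the universal property of $\mc P(N)$ transfers from the standard presentation to $(P, \approx)$ along $\iota$.
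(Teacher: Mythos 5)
Your proposal is correct and is exactly the routine argument the paper leaves implicit (its proof is just ``Straightforward''): specialize the standard tripos presentation of $\mc P(N)$, note that relationality is uniformly realizable for $N$ and that membership in $P$ makes strictness uniformly realizable via $\msf{p}_1$, and pass back via $\phi \mapsto \phi^\ast(n) = \{\msf{p}\,a\,(sa) \mid a \in \phi(n)\}$ using a strictness realizer $s$. The remaining work is indeed only realizer bookkeeping, so nothing essential is missing.
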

\begin{proof}
    Straightforward.
\end{proof}

We now turn our attention to $\lnot\lnot$-stable subsets of $N$, viz.\ subsets
$X$ such that $$\forall x(\lnot\lnot (x \in X) \to x \in X)$$ is true in
$\msf{RT}(A)$. We will abbreviate this formula by
$\textup{Stab}(X)$.

We formulate the following triviality as a proposition for comparison with
\ref{stablenotuniform} below, in the case of $A = {\mc S}$.

\begin{prop}\label{noteverythingisstable}
    If $A$ is nontrivial, then not every subset is $\lnot\lnot$-stable, i.e.\ the sentence $\forall X
    \textup{Stab}(X)$ is not valid in ${\sf RT}(A)$.
\end{prop}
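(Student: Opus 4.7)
The plan is to derive a contradiction from the assumption that some $e \in A$ realizes $\forall X\,\textup{Stab}(X)$, using the presentation of $\mc P(N)$ from Lemma~\ref{simplification}. The key observation is that the realizability interpretation of universal quantification over $\mc P(N)$ demands a single $e$ that, applied to any realizer of $\phi \approx \phi$, yields a realizer of $\textup{Stab}(\phi)$. Since the identity combinator $\msf I$ realizes $\phi(n) \to \phi(n)$ regardless of what $\phi(n) \subseteq A$ is, the element $r_0 := \lambda^\ast n.\,\msf p \msf I \msf I$ realizes $\phi \approx \phi$ uniformly for every $\phi \in P$. Hence $e' := e \cdot r_0$ is a single element of $A$ that realizes $\textup{Stab}(\phi)$ simultaneously for every $\phi \in P$.

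Next I would unpack what this means. A short calculation shows that in any nontrivial pca $[\lnot Q] = \emptyset$ when $Q \neq \emptyset$, so $[\lnot\lnot \phi(n)] = A$ whenever $\phi(n) \neq \emptyset$. Consequently, for every $\phi \in P$, every $n$ with $\phi(n) \neq \emptyset$, and every $b \in A$, the element $e' \overline n b$ is defined and lies in $\phi(n)$.

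To obtain the contradiction, I would use nontriviality of $A$ to pick distinct $d_0,d_1 \in A$ and set $c_i := \msf p d_i \overline 0$; these are distinct (apply $\msf p_0$) and satisfy the side condition $\msf p_1 c_i = \overline 0$ of Lemma~\ref{simplification}, so the functions $\phi_i \colon \mb N \to \mc P(A)$ given by $\phi_i(0) = \{c_i\}$ and $\phi_i(n) = \emptyset$ for $n \geq 1$ are legitimate elements of $P$. Applying the previous paragraph to both $\phi_0$ and $\phi_1$, for any fixed $b \in A$, forces $e' \overline 0 b$ to equal both $c_0$ and $c_1$, which is absurd.

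The main obstacle is really just the bookkeeping around the realizability interpretation of quantification over the power object: one has to verify that the single element $r_0$ suffices for $\phi \approx \phi$ uniformly in $\phi \in P$, and that $[\lnot\lnot \phi(n)]$ is the \emph{whole} of $A$ (not merely nonempty), which is what allows a common $b \in A$ to be supplied at the final step when comparing $\phi_0$ and $\phi_1$.
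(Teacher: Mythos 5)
Your proof is correct. The uniform realizer $r_0$ of $\phi\approx\phi$, the computation that $[\lnot\lnot(n\in\phi)]=A$ whenever $\phi(n)\neq\emptyset$, and the two elements $\phi_0,\phi_1$ of $P$ with disjoint nonempty values at $0$ indeed force the single element $e'\overline{0}b$ into the empty intersection $\{c_0\}\cap\{c_1\}$, which is the desired contradiction. The route, however, is genuinely different from the paper's. The paper argues internally: $\forall X\,\textup{Stab}(X)$ entails $\forall p{:}\Omega\,(\lnot\lnot p\to p)$ (instantiate $X$ by the subset of $N$ defined by $p$), and the latter fails in ${\sf RT}(A)$ as soon as $A$ has two disjoint nonempty subsets, which a nontrivial $A$ supplies via two distinct singletons; the trivial case is dismissed separately since then ${\sf RT}(A)={\rm Set}$. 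So the paper's proof is a short reduction to the known failure of double-negation elimination on $\Omega$, whereas you inline that failure as an explicit realizer computation carried out on the presentation of ${\mc P}(N)$ from Lemma~\ref{simplification}, in the same hands-on style as the paper's proof of Proposition~\ref{shanindoesnothold}. Your version buys self-containedness (no appeal to facts about $\Omega$) and makes the semantics of quantification over the power object explicit, at the cost of the bookkeeping you mention; the paper's buys brevity and independence from any particular presentation of ${\mc P}(N)$. One small remark: your observation that $[\lnot Q]=\emptyset$ for nonempty $Q$ holds in every pca; nontriviality is needed only to obtain the two distinct elements $d_0,d_1$ (and to exclude the trivial case, where the sentence is actually valid).
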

\begin{proof} Note that $\forall X
    \textup{Stab}(X)$ implies $\forall p.\neg\neg p\to p$ (where $p$ is a variable of type $\Omega$), which fails in ${\sf RT}(A)$ as soon as $A$ has two disjoint nonempty subsets. If $A$ is trivial then ${\sf RT}(A)={\rm Set}$, and $\forall X
    \textup{Stab}(X)$ obviously holds.
    \end{proof}
    
By contrast, in ${\sf RT}({\mc S})$ we have the following result (substituting $\mc S$ for $A$ in the definition of $P$ in Lemma~\ref{simplification}):
\begin{prop}\label{stablenotuniform}
    For any $\alpha \in P$, we have a realizer of $\textup{Stab}(\alpha)$.
    Hence, the sentence $\forall X \lnot\lnot \textup{Stab}(X)$ is true in
    $\msf{RT}(\mc S)$. 
\end{prop}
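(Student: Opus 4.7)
The plan is to first show that $\textup{Stab}(\alpha)$ is realized for each fixed $\alpha\in P$, and then deduce the uniform claim $\forall X\lnot\lnot\textup{Stab}(X)$. Since $[\bot]=\emptyset$, a direct unwinding of the realizability interpretation of implication yields that for every $n\in\mb N$
\[
    [\lnot\lnot(n\in\alpha)] \;=\; \begin{cases} \mc S & \text{if } \alpha(n)\neq\emptyset, \\ \emptyset & \text{otherwise.} \end{cases}
\]
Hence an $R\in\mc S$ realizes $\textup{Stab}(\alpha) = \forall n(\lnot\lnot(n\in\alpha)\to n\in\alpha)$ precisely when, for every $n$ with $\alpha(n)\neq\emptyset$ and every $V\in\mc S$, one has $R\overline n V\in\alpha(n)$; for the remaining $n$ the condition is vacuous.

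To produce such an $R$, I use the axiom of choice in the metatheory: for each $n$ with $\alpha(n)\neq\emptyset$, pick some $a_n\in\alpha(n)$, and set $a_n=\emptyset$ otherwise. Define $G\colon\mc S\to\mc S$ by $G(U) = \bigcup_{n\in U} a_n$. This map is manifestly continuous, since its value on $U$ is the union of its values on the finite subsets of $U$, and it satisfies $G(\overline n) = a_n$. Taking $R$ to be the element of $\mc S$ denoted by the $\lambda$-term $\lambda u\, v.\, G(u)$ gives $R\overline n V = G(\overline n) = a_n\in\alpha(n)$ whenever $\alpha(n)\neq\emptyset$, so $R$ realizes $\textup{Stab}(\alpha)$.

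For the uniform sentence I appeal to the same computation of double-negation realizers. The first part ensures $[\textup{Stab}(\alpha)]\neq\emptyset$ for every $\alpha\in P$, so $[\lnot\textup{Stab}(\alpha)]=\emptyset$ and consequently $[\lnot\lnot\textup{Stab}(\alpha)] = \mc S$ for every $\alpha$. Thus any element of $\mc S$---for instance $\emptyset$ itself---tracks $\forall X\lnot\lnot\textup{Stab}(X)$, because the condition on a realizer $R$ collapses to $RV\in\mc S$, which is automatic.

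The main obstacle lies in the second paragraph: we must package the metatheoretic choice of the $a_n$ into a single element of $\mc S$. The key point that makes this succeed is that continuity of $G$ comes for free from its definition as a union indexed by elements of its input, so the metatheoretic choice slots directly into the continuous-function representation of $\mc S$. The lack of uniformity of this choice in $\alpha$ is unavoidable, and is precisely why the proposition only gives $\forall X\lnot\lnot\textup{Stab}(X)$ and not the stronger $\forall X\,\textup{Stab}(X)$.
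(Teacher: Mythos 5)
Your proof is correct and follows essentially the same route as the paper: choose, for each $n$ with $[n\in\alpha]\neq\emptyset$, a realizer $a_n$, package these into a single element of $\mc S$ tracking $n\mapsto a_n$ (your union construction is exactly how the paper's $F_\alpha$ with $F_\alpha\overline n=f_\alpha(n)$ is obtained), and take $\lambda uv.F_\alpha u$ as the realizer of $\textup{Stab}(\alpha)$. Your explicit computation of $[\lnot\lnot\,\textup{Stab}(\alpha)]=\mc S$ just spells out the paper's implicit ``Hence'' step, so the two arguments coincide.
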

\begin{proof}
    Let $\alpha \in P$. Note that there is a function $f_{\alpha}\colon\mb N\to
    \mc S$ such that 
    \[
        f_{\alpha}(n) \in 
        \begin{cases}
            [n\in \alpha] &\text{if } [n \in \alpha]\neq\emptyset; \\
            \{\emptyset\} &\text{else}.
        \end{cases}
    \]
    Moreover, there is an element $F_{\alpha}\in\mc S$ such that
    $F_{\alpha}\overline n = f_{\alpha}(n)$. Now note that $\lambda {xy}. F_\alpha
    x$ realizes $\textup{Stab}(\alpha)$.
\end{proof}
\begin{cor}\label{fo=true} First-order arithmetic in ${\sf RT}({\mc S})$ is the same as first-order arithmetic in Set.\end{cor}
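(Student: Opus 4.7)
Proof plan: We proceed by induction on arithmetic sentences $\phi$, aiming to show $\msf{RT}(\mc S) \models \phi$ if and only if $\mb{N} \models \phi$.

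The key enabling observation, implicit in the proof of Proposition~\ref{stablenotuniform}, is that for any external function $r \colon \mb{N} \to \mc S$, the set $a = \{\pair{2^n,y} \mid n \in \mb{N},\, y \in r(n)\}$ is an element of $\mc S$ satisfying $a\overline n = r(n)$ for every $n$. This allows us to turn any family of realizers indexed by $\mb{N}$ into a single uniform realizer.

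Most inductive cases follow the standard realizability interpretation. For $\land$, $\lor$, and $\exists$, realizers decompose into their components, so both directions reduce directly to the induction hypothesis; the backward direction for $\lor$ or $\exists$ requires only a classical external choice of disjunct or witness, together with the single realizer supplied by the induction hypothesis. For $\to$ and $\lnot$, one uses that when $\phi$ is not realized, any element of $\mc S$ realizes $\phi \to \psi$ vacuously, while otherwise a realizer of $\psi$ can be composed with a trivial $\lambda$-term. Atomic equations between numerals are decidable in both $\msf{RT}(\mc S)$ and $\mb{N}$, so they agree.

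The only genuinely non-trivial case is $\forall x\, \phi(x)$. If $\mb{N} \models \forall x\, \phi(x)$, then for each $n \in \mb{N}$ the induction hypothesis, applied to the sentence $\phi(\overline n)$, gives a realizer $r_n \in \mc S$; external choice produces a function $n \mapsto r_n$, and the coding observation above yields an $a \in \mc S$ with $a\overline n = r_n$, which is the required uniform realizer of $\forall x\, \phi(x)$. The converse direction is immediate. This uniformisation step is the main obstacle, and the richness of $\mc S$ exploited in Proposition~\ref{stablenotuniform} is precisely what delivers it.
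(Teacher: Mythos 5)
Your proof is correct: the coding $a=\{\langle 2^n,y\rangle \mid n\in\mb N,\ y\in r(n)\}$ does satisfy $a\overline n=r(n)$ (the only $m$ with $e_m\subseteq\{n\}$ are $m=0$ and $m=2^n$, and $0$ is not a power of two), and the propositional, existential and implication cases are handled as you say. The organization differs a little from the paper's: there, the corollary is deduced from Proposition~\ref{stablenotuniform} itself --- for every subformula $\psi$ of the given true sentence $\phi$ one first obtains a single realizer of $\forall x(\psi(x)\vee\neg\psi(x))$ (uniform decidability of subformulas), and from these one constructs a realizer of $\phi$; the converse direction (realized implies true) then comes for free by applying the same argument to $\neg\phi$. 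You instead inline the same uniformization device (the $F_\alpha$-style graph coding used in the proof of Proposition~\ref{stablenotuniform}) directly into the $\forall$-case of an induction proving ``realized if and only if true''. Both arguments turn on exactly the same key fact --- any external $\mb N$-indexed family of realizers can be packaged into one element of $\mc S$ --- so nothing essential is different; your version is somewhat more self-contained, while the paper's factors the work through the already-established proposition.
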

\begin{proof} Given a true first-order arithmetical sentence $\phi$, we have, by Proposition~\ref{stablenotuniform}, a realizer for $\forall x(\psi (x)\vee\neg\psi (x))$, for every subformula $\psi$ of $\phi$. From this one constructs a realizer for~$\phi$.\end{proof}

Just as for ordinary natural numbers, we write $\langle -,\!- \rangle$ for a
definable coding of $N^2$ to $N$ in ${\sf RT}(A)$. The sentence $\forall
X\exists Y(\textup{Stab}(Y)\land \forall x(x\in X \leftrightarrow \exists y
\langle{y,x}\rangle\in Y))$ is known as \emph{Shanin's Principle\/}. It holds in the Effective Topos (\cite{OostenJ:reaics}, p.\ 127). Internally, it says that every
subset of $N$ is covered by a stable subset of $N$.
\begin{prop}\label{shanindoesnothold}
    If $|A| > |\mb N|$, then Shanin's Principle does not hold in
    ${\sf RT}(A)$.
\end{prop}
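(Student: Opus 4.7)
The plan is to argue by contradiction: assume Shanin's Principle is realized in $\msf{RT}(A)$.

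First, I would extract an upper bound $|\mathrm{Sub}(N)|\le 2^{|\mb N|}$. The $\neg\neg$-sheaf subtopos of $\msf{RT}(A)$ is equivalent to $\msf{Set}$ and the associated sheaf sends $N$ to $\mb N$; combined with the adjunction $\Gamma\dashv\nabla$ this yields $\mathrm{Hom}(N,\nabla 2)\cong\mathrm{Hom}(\mb N,2)=2^{|\mb N|}$, and likewise for $N\times N$, so the $\neg\neg$-stable subobjects of $N\times N$ form a family of cardinality $2^{|\mb N|}$. Shanin's realizer induces an assignment $X\mapsto Y_X$ from $\mathrm{Sub}(N)$ into this family via the existential witness, and the map is injective because $X$ is recoverable from $Y_X$ as the second projection of an existential; this gives the bound.

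Second, I would exhibit more than $2^{|\mb N|}$ non-isomorphic subobjects of $N$ in $\msf{RT}(A)$ when $|A|>|\mb N|$. By Lemma~\ref{simplification}, $\mathrm{Sub}(N)$ is the set of $\approx$-equivalence classes of $P$, a set of cardinality $2^{|A|}$. Parametrising subobjects by families $T\colon\mb N\to\mc P(A)$ satisfying the pairing condition, and bounding each $\approx$-orbit by $|A|$ through the choice of a uniform tracker in $A$, should give $|\mathrm{Sub}(N)|\ge 2^{|A|}/|A|=2^{|A|}$, contradicting the upper bound once one knows $2^{|A|}>2^{|\mb N|}$.

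The main obstacle is the final cardinal comparison $2^{|A|}>2^{|\mb N|}$ in the range $\aleph_0<|A|\le 2^{|\mb N|}$, which is not automatic in ZFC---by Easton's theorem, for instance, $2^{\aleph_0}=2^{\aleph_1}$ is consistent. I would expect the intended proof either to sharpen this orbit count (perhaps exploiting that trackers in a pca cannot encode more than countable information per fibre) or to proceed by a direct diagonalisation against the postulated realizer of Shanin, producing an explicit $\phi\in P$ for which the uniformly computed stable cover fails to project back onto $\phi$; working out such a concrete counterexample is the substantive content of the argument.
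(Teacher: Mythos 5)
Your route cannot be completed, and you have in fact identified the fatal point yourself: the proposition is asserted for \emph{every} pca $A$ with $|A|>|\mb N|$, and in the range $\aleph_0<|A|\le 2^{\aleph_0}$ it is consistent with ZFC that $2^{|A|}=2^{\aleph_0}$, so no comparison of cardinalities of subobjects of $N$ against the $2^{\aleph_0}$ stable ones can ever yield the required contradiction. (Note that the intended applications, $\mc S$ and ${\mc K}_2$, have $|A|=2^{\aleph_0}$, squarely in this range.) Your upper bound via $\Gamma\dashv\nabla$ is essentially fine, but the lower bound is not: the $\approx$-class of an element of $P$ need not have size at most $|A|$ --- for instance, every $\psi\in P$ with ${\msf p}a_0\overline n\in\psi(n)$ for all $n$ (for a fixed $a_0$) is equivalent to the full predicate $\phi(n)=\{a\mid {\msf p}_1a=\overline n\}$, and there are $2^{|A|}$ such $\psi$ --- and in any case dividing cardinals does not count equivalence classes. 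More fundamentally, counting subobjects discards exactly the strength of the hypothesis: a realizer of Shanin's Principle must work \emph{uniformly over all representatives} $\phi\in P$, not merely induce an assignment on subobjects, and that uniformity is what the actual proof exploits.

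The paper's argument runs as follows. For each $a\in A$ set $\alpha_a(0)=\{{\msf p}a\overline 0\}$ and $\alpha_a(n+1)=\emptyset$; these are $|A|$-many representatives of one and the same subobject $\{0\}\subseteq N$, and the single realizer $R$ must serve them all. From $R_1$ one extracts $R'$ with $R'({\msf p}a\overline 0)\in[\langle m_a,0\rangle\in\beta_a]$ for some $m_a\in\mb N$; since $|A|>|\mb N|$, pigeonhole gives $a\neq a'$ with $m_a=m_{a'}=m$. The stability realizer obtained uniformly from $R_0$ then produces one and the same element of $[\langle m,0\rangle\in\beta_a]\cap[\langle m,0\rangle\in\beta_{a'}]$, and pushing this back through $R_1$ yields a common realizer of $0\in\alpha_a$ and $0\in\alpha_{a'}$, which is impossible because $\{{\msf p}a\overline 0\}$ and $\{{\msf p}a'\overline 0\}$ are disjoint. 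This pigeonhole-on-realizers step is the missing idea; your closing sentence gestures at a ``diagonalisation against the realizer'' but does not supply it, and it is precisely there that the substantive content of the proof lies.
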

\begin{proof}
    Assume for the sake of contradiction that it does. Then we have a realizer
    \[
        R \in \bigcap_{\alpha \in P}\bigcup_{\beta \in P}\tint{
        \textup{Stab}(\beta) \land \forall x(x\in \alpha \leftrightarrow
        \exists y \langle y,x \rangle \in \beta)}.
    \]
    Let us write $R_0 = \msf{p_0} R$ and $R_1 = \msf{p_1} R$.
    For each $a \in A$, define the element $\alpha_a \in P$ by:
    \[
        \alpha_a(0) = \{\msf p a \overline 0\}\quad\text{and}\quad \alpha_a(n+1) =
        \emptyset\quad\text{for any } n \in \mb N.
    \]
    For each $a\in A$, pick some $\beta_a \in P$ such that $R \in \tint{
    \textup{Stab}(\beta_a) \land \forall x (x \in \alpha_a \leftrightarrow
    \exists y \langle y,x \rangle \in \beta_a)}$.

    From $R_1$, we effectively obtain $R' \in A$ such that for every $a \in
     A$, we have:
    \[
        R'(\msf p a \overline 0) \in [\langle m, 0\rangle \in \beta_a]
    \]
    for some $m \in \mb N$.
    As $|A|>|\mb N|$, there must be two different $a,a'\in A$ such that 
    \begin{equation*}
        R'(\msf p a \overline 0) \in [\langle m,0\rangle \in \beta_a] \text{ and }
        R'(\msf p a' \overline 0) \in [\langle m,0\rangle \in \beta_{a'}] \tag{$\ast$}
    \end{equation*}
    for the same $m\in\mb N$.
    
    From $R_0$, we effectively obtain $s \in A$ witnessing the stability of
    $\beta_a$ and $\beta_{a'}$. Thus, by $(\ast)$, we can use $s$ to get a
    common realizer:
    \[
        s \overline{\langle m, 0\rangle}\msf i \in [\langle m,0\rangle \in \beta_a]
        \cap [\langle m, 0\rangle \in \beta_{a'}].
    \]

    Finally, using $s\overline{\langle m,0\rangle}\msf i$ and $R_1$, we find a
    realizer in the intersection \break $[0 \in \alpha_a] \cap [0 \in
    \alpha_{a'}] = \alpha_a \cap \alpha_{a'}$. But this is impossible, because
    $a$ and $a'$ are different, so that $\alpha_a$ and $\alpha_{a'}$ are disjoint.
\end{proof}
\begin{cor}
    Shanin's Principle does not hold in $\msf{RT}(\mc S)$ and
    $\msf{RT}({\mc K}_2)$ (where ${\mc K}_2$ is Kleene's second model).
\end{cor}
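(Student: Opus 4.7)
The plan is essentially to observe that the corollary is a direct instance of Proposition~\ref{shanindoesnothold}, so all that needs to be checked is the cardinality hypothesis $|A| > |\mb N|$ for the two specific pcas in question.

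First I would note that $\mc S = \mc P(\mb N)$ has cardinality $2^{\aleph_0}$, which strictly exceeds $\aleph_0 = |\mb N|$; hence Proposition~\ref{shanindoesnothold} applies with $A = \mc S$ and yields that Shanin's Principle fails in $\msf{RT}(\mc S)$. Similarly, Kleene's second model $\mc K_2$ has underlying set $\mb N^{\mb N}$, again of cardinality $2^{\aleph_0} > \aleph_0$, so the same proposition applied with $A = \mc K_2$ delivers the second half of the corollary. There is no real obstacle here: the only thing one might want to spell out is that the statement of Proposition~\ref{shanindoesnothold} was proved for an arbitrary pca $A$ and so applies uniformly to both cases, in particular using only the general pca structure on $\mc K_2$ (with its standard pairing combinators) that was implicit in the hypothesis of the preceding proposition.
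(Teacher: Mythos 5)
Your proposal is correct and matches the paper's intent exactly: the corollary is stated without proof precisely because it is the instance of Proposition~\ref{shanindoesnothold} for $A=\mc S$ and $A={\mc K}_2$, both of which have cardinality $2^{\aleph_0}>|\mb N|$. Your only addition, noting that the proposition was proved for an arbitrary pca and so applies to ${\mc K}_2$ as well, is a harmless and accurate clarification.
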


\section{The Sierpi\'nski object as a dominance}
One can also study the Sierpi\'nski object from a synthetic domain theoretic
standpoint. This was done by Wesley Phoa (\cite{PhoaW:domtrt}, Chapter 12 and \cite{PhoaW:bdgm}, Proposition 3.1), who worked in the category of modest sets over the r.e.\ graph model, John Longley (\cite{LongleyJ:reatls}, 5.3.7) and Alex Simpson (\cite{LongleyJ:uadtrm}). Our treatment is based on Proposition~\ref{realizersisexponential}.
\begin{defi}\label{membershiponS}
    Define a relation $\in$ between $N$ and $S$ (the object of realizers) by
    taking the following pullback
    \[
        \begin{tikzcd}
            \in \ar[dr, "\lrcorner", phantom, very near start] \ar[rr] \ar[d] &
                                                                              &
            1 \ar[d, "t"] \\ 
            N \times S \cong N \times \Sigma^N \ar[r, "\textup{ev}"] & \Sigma
                \ar[r]
                                                             & \Omega
        \end{tikzcd}
    \]
    where $\Sigma\to\Omega$ is the morphism induced by the function $0 \mapsto
    \emptyset$ and $1 \mapsto \mc S$.
\end{defi}
\begin{rem}\label{inisnotnotstable}
    Observe that $\in$ is given by
    \[
        [n \in U] =
        \begin{cases}
            \msf p \overline n U &\text{if $n$ is an element of $U$}; \\
            \emptyset &\text{else}.
        \end{cases}
    \]
    In particular, $\in$ is $\lnot\lnot$-stable.
\end{rem}

\begin{lem}\label{IP_R}
    Let $\mc A$ be a arbitrary \emph{total} pca and let $\msf{RT}(\mc A)$ be its
    realizability topos. Let $A = (\mc A, \{-\})$ be the object of realizers in
    the topos. The scheme
    \[
        (\textup{IP}_A) \quad (\lnot\phi \to \exists x{:}A \psi) \to \exists x{:}A
        (\lnot\phi \to \psi)
    \]
    with $x$ not free in $\phi$ is valid in $\msf{RT}(\mc A)$.
\end{lem}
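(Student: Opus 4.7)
The plan is to exhibit a single element of $\mc A$ that uniformly realizes the scheme. Given a realizer $r$ of the hypothesis $\lnot\phi \to \exists x{:}A\,\psi$, we need to produce a pair $\msf{p}au$ where $a \in \mc A$ is a witness and $u$ realizes $\lnot\phi \to \psi(a)$. The decisive observation is that, because $\mc A$ is total, $\lnot\phi$ is realizable precisely when $[\phi] = \emptyset$: any realizer of $\lnot\phi$ must send every realizer of $\phi$ into the empty set $[\bot] = \emptyset$, which forces $[\phi] = \emptyset$, and conversely if $[\phi]=\emptyset$ then every element of $\mc A$ vacuously realizes $\lnot\phi$.

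The realizer I propose is $e = \lambda^\ast r.\msf{p}\,(\msf{p}_0(r\msf{k}))\,(\lambda^\ast v.\msf{p}_1(r\msf{k}))$, where $\msf{k}$ is any fixed element of $\mc A$; totality guarantees that $r\msf{k}$ is defined regardless of whether $\msf{k}$ realizes $\lnot\phi$. I then split into two cases. If $[\phi] = \emptyset$, then $\msf{k}$ trivially realizes $\lnot\phi$, so $r\msf{k} \in [\exists x{:}A\,\psi]$; thus $a := \msf{p}_0(r\msf{k}) \in \mc A$ is a witness and $\msf{p}_1(r\msf{k}) \in [\psi(a)]$, whence $\lambda^\ast v.\msf{p}_1(r\msf{k})$ realizes $\lnot\phi\to\psi(a)$. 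If $[\phi]\neq\emptyset$, then $[\lnot\phi]=\emptyset$, so $\lnot\phi\to\psi(a)$ is vacuously realized by any element for any choice of $a$; in particular the same recipe, with $a := \msf{p}_0(r\msf{k})$, goes through. Hence $er$ realizes $\exists x{:}A(\lnot\phi\to\psi)$ in both cases.

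The main obstacle is that this case distinction is not internally available, so the argument succeeds only because a single element $e$ happens to work on either side of the split, for entirely different reasons. Totality is essential: over a partial pca, $r\msf{k}$ might fail to converge when $\msf{k}$ does not realize $\lnot\phi$, so the putative witness $\msf{p}_0(r\msf{k})$ would be undefined. Equally essential is that the existential ranges over $A$ rather than over an object like $N$: since $E_A(a) = \{a\}$, the arbitrary element $\msf{p}_0(r\msf{k}) \in \mc A$ is automatically accepted as a witness in $A$, with no need to verify that it codes something of more restricted type.
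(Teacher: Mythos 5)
Your proof is correct and matches the paper's: the realizer you construct, $\lambda^\ast r.\,\msf{p}(\msf{p}_0(r\msf{k}))(\lambda^\ast v.\msf{p}_1(r\msf{k}))$, is exactly the element the paper exhibits, and your case split on whether $[\phi]$ is empty, together with the observations about totality and $E_A(a)=\{a\}$, is just the verification the paper leaves to the reader.
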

\begin{proof}
    It is not hard to verify that  $\lambda u.{\msf{p}}({\msf{p}}_0(u{\sf k}))(\lambda v.{\msf{p}}_1(u{\sf k}
    ))$ realizes the scheme.
\end{proof}

\begin{prop}\label{omega'def}
    The subobject $\Omega'$ of $\Omega$ given by
    
    \[
        \Omega ' = \tint{\exists X{:}S (p \leftrightarrow 1 \in X)}
    \]
    with $p$ ranging over $\Omega$,
    is a dominance in $\msf{RT}(\mc S)$. Moreover, it is $\lnot\lnot$-separated,
    i.e.\ $\forall p{:}\Omega' (\lnot\lnot p \to p)$. Furthermore, $\bot \in \Omega'$.
\end{prop}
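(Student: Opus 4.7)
The plan is to verify in turn that $\Omega'$ is a dominance, that it is $\lnot\lnot$-separated, and that $\bot \in \Omega'$. Of these three assertions, only the dominance closure under conjunction requires real work; the rest follow by short direct constructions.

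For the easy pieces, $\top \in \Omega'$ will be witnessed by $X = \overline 1 \in S$ (so that $1 \in X$ holds, realized by $\msf p\overline 1\overline 1$), while $\bot \in \Omega'$ will be witnessed by $X = \emptyset$ (since by Remark~\ref{inisnotnotstable} the proposition $1 \in \emptyset$ has no realizer). For $\lnot\lnot$-separation, I will assume internally that $p \in \Omega'$ via some $X \in S$; Remark~\ref{inisnotnotstable} provides a realizer of $\lnot\lnot(1 \in X) \to 1 \in X$, which together with the equivalence $p \leftrightarrow 1 \in X$ yields $\lnot\lnot p \to p$.

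For the dominance closure $\forall p{:}\Omega'.\,\forall q{:}\Omega.\,(p \to q \in \Omega') \to p \land q \in \Omega'$, I will suppose $p \in \Omega'$ with witness $X \in S$ and assume a realizer $t$ of $p \to q \in \Omega'$. A realizer of $p$ is essentially $\msf p\overline 1 X$, and $q \in \Omega'$ unfolds to $\exists Y{:}S.\,(q \leftrightarrow 1 \in Y)$. I extract a continuous function of $X$ by setting $F(X) := \msf p_0(t(\msf p\overline 1 X))$; whenever $1 \in X$, $F(X)$ is a witness $Y \in S$ with $q \leftrightarrow 1 \in Y$, and the equivalence itself is realized by $\msf p_1(t(\msf p\overline 1 X))$. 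Then I define $Z$ continuously from $X$ so that $1 \in Z$ holds iff both $1 \in X$ and $1 \in F(X)$; concretely, one may take $Z = \{n \in \mb N : n = 1,\ n \in X,\ n \in F(X)\}$, which is continuous in $X$ because application in $\mc S$ is continuous. A direct combinator calculation then produces realizers of $p \land q \leftrightarrow 1 \in Z$.

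The main obstacle will be the bookkeeping in the dominance step: carefully tracking realizers through the equivalences $p \leftrightarrow 1 \in X$ and $q \leftrightarrow 1 \in F(X)$ (the latter only valid when $1 \in X$), and packaging everything into a single combinator that, from the data witnessing $p \in \Omega'$ and $p \to q \in \Omega'$, produces $Z$ together with realizers of the desired biconditional. Once the realizer extraction from $t$ is set up using the pairing structure recalled in Section~2, the remaining checks should reduce to routine combinator manipulation.
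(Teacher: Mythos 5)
Your proposal is correct, and its overall decomposition (top, bottom and $\lnot\lnot$-separation are immediate; the real content is closure under dependent conjunction) matches the paper. The difference lies in how the witness for $p\wedge q$ is produced. The paper notes that $1\in U$ is $\lnot\lnot$-stable (Remark~\ref{inisnotnotstable}) and invokes the independence-of-premise principle $\textup{IP}_A$ of Lemma~\ref{IP_R} to commute the existential over $S$ past the implication, obtaining a witness $V$ with $1\in U\to(q\leftrightarrow 1\in V)$ and then setting $Z=U\cap V$. You instead unfold this abstractly stated step by hand: you manufacture a candidate realizer of $p$ from the witness $X$ alone (via the canonical realizer $\msf p\overline 1 X$ of $1\in X$ and the biconditional -- note that $\msf p\overline 1 X$ itself realizes $1\in X$, not $p$, so the conversion through the equivalence you flag is genuinely needed), apply the realizer $t$ of $p\to(q\in\Omega')$ to it, and extract $F(X)=\msf p_0(t(\cdots))$, finally taking $Z=\{1\}\cap X\cap F(X)$, which is the same intersection trick. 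This works for exactly the reasons underlying Lemma~\ref{IP_R}: realizers of $1\in X$ are canonical and computable from $X$ (the stability fact of Remark~\ref{inisnotnotstable}), and application in $\mc S$ is total, so applying $t$ to the candidate when $p$ fails is harmless, while $1\notin X$ in that case guarantees $1\notin Z$. Your route buys concreteness -- the continuity of $X\mapsto Z$ and the combinator producing the biconditional are made explicit, with no appeal to Lemma~\ref{IP_R} -- at the cost of the realizer bookkeeping you acknowledge; the paper's route isolates the logical principle and works uniformly for any total pca. Do make sure, when writing it out, that in the forward direction of $p\wedge q\leftrightarrow 1\in Z$ you re-route the given realizer of $p$ through the canonical one determined by $X$, since the witness produced by $t$ may depend on which realizer of $p$ it is fed, and $Z$ was defined using the canonical choice.
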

\begin{proof}
    Double negation separation is immediate by Remark~\ref{inisnotnotstable}. Further,
    it is clear that $\top \in \Omega'$ (take $X = \overline 1$). 
    
    Suppose $p\in\Omega '$ and $p\to (q\in\Omega ')$. Take $U\in\mc S$ such that $p
    \leftrightarrow 1 \in U$. Then, $1 \in U \to \exists X{:}S (q
    \leftrightarrow 1\in X)$, so by Lemma~\ref{IP_R} and Remark~\ref{inisnotnotstable}, we
    get that $\exists X{:}S (1 \in U \to (q \leftrightarrow 1 \in X))$. Take such
    $V\in\mc S$. Then, $p\land q \leftrightarrow 1\in U\cap V$. Hence, $p\land q
    \in \Omega'$ and $\Omega'$ is a dominance.
\end{proof}

We have already remarked that the object $\Omega '$ from \ref{omega'def} is
$\lnot\lnot$-separated. Indeed, it is isomorphic to an assembly.
\begin{prop}\label{omega'isomorphictosigma}
    The object $\Omega'$ is isomorphic to $\Sigma$.
\end{prop}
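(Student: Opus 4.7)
My plan is to identify $\Omega'$ with the image (in $\Omega$) of the natural map $\iota\colon\Sigma\to\Omega$ of Definition~\ref{membershiponS}, and then to show that $\iota$ is a monomorphism, so that this image is isomorphic to $\Sigma$ itself.

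The key input is Proposition~\ref{realizersisexponential}. Under the iso $S\cong\Sigma^N$, an element $X\in S$ becomes its characteristic function $\chi_X\colon N\to\Sigma$, and the proposition $1\in X$ translates to $\chi_X(\overline 1)=1$. Hence the map $S\to\Omega$ given by $X\mapsto(1\in X)$, whose regular image is $\Omega'$ by definition, factors as
\[
S\;\cong\;\Sigma^N\xrightarrow{\msf{ev}_{\overline 1}}\Sigma\xrightarrow{\iota}\Omega.
\]
Since $\msf{ev}_{\overline 1}\colon\Sigma^N\to\Sigma$ is a split epimorphism (its section sends $s\in\Sigma$ to the constant map $\lambda n.s$), the image of the composite coincides with the image of $\iota$, so $\Omega'$ equals the image of $\iota$.

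The remaining step is to prove $\iota$ is monic, which I propose to do by exhibiting a retraction $r\colon\Omega'\to\Sigma$. I obtain $r$ by descending $\msf{ev}_{\overline 1}\colon S\to\Sigma$ along the canonical epimorphism $S\twoheadrightarrow\Omega'$: the kernel pair of this epi consists of pairs $(X,Y)$ with $(1\in X)\leftrightarrow(1\in Y)$, and for any such pair $\chi_X(\overline 1)=\chi_Y(\overline 1)$ in the two-point modest set $\Sigma$. A tracker for $\msf{ev}_{\overline 1}$ is furnished by the continuous operation $X\mapsto X\cap\{1\}$, essentially the function $G$ used in the proof of Proposition~\ref{realizersisexponential}.

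The main technical step is verifying that this descent yields an actual tracked morphism in $\msf{RT}(\mc S)$; this turns on the $\lnot\lnot$-stability of $1\in X$ (Remark~\ref{inisnotnotstable}), which allows a concrete realizer for the biconditional to be converted into a realizer for equality in $\Sigma$. Once $r$ is in place, $r\circ\iota=\id_\Sigma$ follows immediately from $r(\bot)=0$ and $r(\top)=1$, so $\iota$ is split monic, hence monic, and therefore induces the desired isomorphism $\Sigma\xrightarrow{\sim}\Omega'$.
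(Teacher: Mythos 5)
Your argument is correct, but it takes a genuinely different route from the paper's. The paper presents $\Omega'$ concretely as $(\mc P(\mc S),\sim)$ and writes down an explicit functional relation $\Phi$ to $\Sigma$, verifying strictness, single-valuedness, relationality, totality, epicness and monicness by hand with explicit realizers; you instead argue structurally: by Definition~\ref{membershiponS} the map $X\mapsto(1\in X)$ is $\iota\circ\msf{ev}_{\overline 1}$ under $S\cong\Sigma^N$ (Proposition~\ref{realizersisexponential}), and since equality in $\Omega$ is logical equivalence, $\Omega'$ is its image; splitness of $\msf{ev}_{\overline 1}$ identifies this image with the image of $\iota$; and monicness of $\iota$ then yields $\Sigma\cong\Omega'$. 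This buys conceptual economy — the only realizer-level work is the tracker given by the graph of $U\mapsto U\cap\{1\}$ together with Remark~\ref{inisnotnotstable} — while the paper's proof, though longer, produces the isomorphism completely explicitly. Two of your steps deserve slightly more care than you give them, though both go through. First, the descent condition is the internal statement $\forall X,Y{:}S\,\bigl(((1\in X)\leftrightarrow(1\in Y))\to \msf{ev}_{\overline 1}(X)=_\Sigma\msf{ev}_{\overline 1}(Y)\bigr)$, and the realizer of the conclusion comes from $X$ itself via $U\mapsto U\cap\{1\}$; Remark~\ref{inisnotnotstable} is needed only to see that the hypothesis forces $\chi_X(1)=\chi_Y(1)$, so that this element lands in the correct realizing set (your phrase about ``converting'' the biconditional realizer is a little misleading but harmless). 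Second, $r\circ\iota=\id_\Sigma$ is cleanest obtained by cancelling the split epi: writing $q\colon S\twoheadrightarrow\Omega'$ for the image part, one has $q=\bar\iota\circ\msf{ev}_{\overline 1}$ (cancel the mono $\Omega'\hookrightarrow\Omega$), hence $r\circ\bar\iota\circ\msf{ev}_{\overline 1}=r\circ q=\msf{ev}_{\overline 1}$ and so $r\circ\bar\iota=\id_\Sigma$; checking it on the points $\bot,\top$ as you do also suffices, but only because the composite is an endomap of the assembly $\Sigma$ and is therefore determined by its underlying function, which should be said. Finally, note that monicness of $\iota$ also has a one-line direct proof — for $x\neq x'$ the set $[\iota(x)\leftrightarrow\iota(x')]$ is empty, so $\lambda uvw.u$ realizes injectivity — though your retraction has the advantage of directly inverting the epi $\bar\iota$.
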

\begin{proof}
    First of all, observe that $\Omega'$ is isomorphic to the object
    $(\mc P(\mc S),\sim)$ where
    \begin{align*}
        \mc U \sim \mc V &= \mc U \leftrightarrow \mc V \land E(\mc U),
        \text{with} \\
        E(\mc U) &= \{[W,U] \in \mc S \mid \text{if $1\in W$, then $U \in \mc U$
        and if $\mc U \neq \emptyset$, then $1 \in W$} \}.
    \end{align*}
    Define a continuous function $F\colon \mc S \to \mc S$ by
    \[
        F(\emptyset) = \emptyset \text{ and } F(V) = \overline 1 \text{ for any
        non-empty $V$}.
    \]
    Next, define $\Phi \colon \mc P(\mc S) \times \{0,1\} \to \mc P(\mc S)$ by
    \[
        (\mc U, i) \mapsto \{[W,U,C] \mid [W,U] \in E(\mc U), C \in E_\Sigma(i)
        \text{ and } i = 1 \Leftrightarrow 1 \in W \}.
    \]
    We show that $\Phi$ is a functional relation from $(\mc P(\mc S),\sim)$ to
    $\Sigma$. Strictness is immediate. For single-valuedness, suppose we have
    $[W,U,C] \in \Phi(\mc U,i)$ and $[W',U',C'] \in \Phi(\mc U,j)$. We show that
    $i = j$. By definition, we have
    \[
        i = 1 \Leftrightarrow 1 \in W \Leftrightarrow U \in \mc U \Rightarrow
        1 \in W' \Leftrightarrow j = 1
    \]
    and similarly, $i = 0 \Rightarrow j=0$. Thus, $i=j$, as desired.
    Suppose $[W,U,C] \in \Phi(\mc U,i)$ and $B\in [\mc U \leftrightarrow \mc
    V]$. We must effectively obtain an element of $\Phi(\mc V,i)$. But if $B_0$
    realizes $\mc U \to \mc V$, then one easily sees that $[W,B_0 U,C]$ is an
    element of $\Phi(\mc V,i)$. So, $\Phi$ is relational. For totality, suppose
    $[W,U]\in E(\mc U)$, then $[W,U,F(W)] \in \Phi(\mc U,i)$ for some $i\in
    \{0,1\}$, by construction of $E$ and $F$. We conclude that $\Phi$ is a
    functional relation.

    Moreover, (the arrow represented by) $\Phi$ is easily seen to be epic. For,
    if $C\in E_\Sigma(i)$, then $[F(C),F(C),C] \in \Phi(\mc U_i,i)$ with $\mc
    U_1 = \{\overline 1\}$ and $\mc U_0 = \emptyset$ by construction of $F$ and
    definition of $E_\Sigma$. 

    Finally, we prove that $\Phi$ is monic and hence that $\Phi$ represents an
    isomorphism, as desired. Suppose we have $[W,U,C] \in \Phi(\mc U,i)$ and
    $[W',U',C] \in \Phi(\mc V,i)$. It suffices to effectively provide an element
    of $\mc U \leftrightarrow \mc V$, since $[W,U]$ is an element of $E(\mc U)$
    already. But $[\lambda x.U',\lambda x.U]$ is easily seen to do the job.
\end{proof}

Proposition \ref{omega'isomorphictosigma} implies that we have a notion of $\Sigma$-{\em subobjects\/} of any object in $\msf{RT}(\mc S)$: a mono $Y\to X$ represents a $\Sigma$-subobject of $X$ if and only if its classifying map factors through $\Sigma$. We write $U\subseteq _{\Sigma}X$ to indicate that $U$ is a $\Sigma$-subobject of $X$.

The following proposition characterizes these $\Sigma$-subobjects for assemblies $X$.
\begin{prop}\label{characterizationsigmasubassemblies}
    Let $X$ be an assembly. There is a bijective correspondence between
    morphisms $X \to \Sigma$ and subsets $X'\subseteq |X|$ for which there is
    an open $\mc U\subseteq\mc S$ with the following properties:
    \begin{align*}
        x \in X'     &\Rightarrow E_X(x) \subseteq \mc U; \\
        x \not\in X' &\Rightarrow E_X(x) \cap \mc U = \emptyset. 
    \end{align*}
    Moreover, an assembly $Y$ is a $\Sigma$-subobject of $X$ if and only if $Y$
    is isomorphic to some assembly $(X',E)$ where $X'\subseteq |X|$ is as above
    and $E$ is the restriction of $E_X$ to $X'$.
\end{prop}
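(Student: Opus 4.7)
The approach is to exploit that $\Sigma$ is modest, and in particular $\lnot\lnot$-stable, so that morphisms $X\to\Sigma$ in $\msf{RT}(\mc S)$ coincide with morphisms of assemblies. Such a morphism amounts to a function $f\colon|X|\to\{0,1\}$ together with a tracker $F\in\mc S$, which unpacks (using $E_\Sigma(0)=\{\emptyset\}$ and $E_\Sigma(1)=\{\{1\}\}$) as: $FU=\{1\}$ for every $U\in E_X(x)$ with $f(x)=1$, and $FU=\emptyset$ for every $U\in E_X(x)$ with $f(x)=0$.

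For the bijective correspondence, I send a morphism $\phi$ to $X'=\phi^{-1}(\{1\})\subseteq|X|$. Given a tracker $F$ of $\phi$, put $\mc U=\{U\in\mc S\mid 1\in FU\}$; this is open in $\mc S$ since continuity of $F$ is precisely the statement that $1\in F(U)$ iff $1\in F(p)$ for some finite $p\subseteq U$, and the two displayed implications then follow at once from the unpacked tracking condition. Conversely, given $X'$ and an open $\mc U$ satisfying those implications, I take $f=\chi_{X'}$ and define $F\colon\mc S\to\mc S$ by $F(U)=\{1\}$ if $U\in\mc U$ and $F(U)=\emptyset$ otherwise; openness of $\mc U$ is exactly what makes this dichotomous $F$ continuous, and the two implications guarantee that $\graph(F)$ tracks $f$. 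These operations are visibly mutually inverse, and the assignment depends only on $X'$ (not on the choice of $\mc U$), since any two trackers of the same underlying function represent the same morphism of assemblies.

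For the second part, Propositions~\ref{omega'def} and~\ref{omega'isomorphictosigma} tell us that $\Sigma$ is a dominance classifying $\Sigma$-subobjects, so each $\Sigma$-subobject of $X$ arises, up to isomorphism, as the pullback of $\top\colon 1\to\Sigma$ along some $\phi\colon X\to\Sigma$. This pullback may be computed in $\msf{Ass}(\mc S)$, which agrees with the topos-theoretic pullback since $X$, $1$, and $\Sigma$ are all $\lnot\lnot$-stable. For the $\phi$ associated to $X'$ by the bijection above, a direct calculation yields (up to isomorphism) the sub-assembly $(X',E)$ with $E$ the restriction of $E_X$ to $X'$, equipped with the inclusion into $X$ as the classifying mono; conversely, any such sub-assembly is recovered this way from its classifier.

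The main obstacle I expect is in the backward direction of the bijection: verifying that openness of $\mc U$ is precisely equivalent to continuity of the dichotomous $F$ (so that $\graph(F)$ exists in $\mc S$ and actually tracks $\chi_{X'}$). The remainder is routine unpacking of the tracking condition together with a standard pullback calculation in $\msf{Ass}(\mc S)$.
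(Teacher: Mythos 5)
Your proof is correct and takes essentially the same route as the paper: extract $X'=\phi^{-1}(\{1\})$ and an open $\mc U$ from a tracker, conversely track $\chi_{X'}$ by the graph of the dichotomous map determined by $\mc U$ (openness giving continuity), and obtain the second claim by classifying $\Sigma$-subobjects as pullbacks of $t\colon 1\to\Sigma$ computed in assemblies. The only difference is your choice $\mc U=\{V\in\mc S\mid 1\in FV\}$ in place of the paper's $\{V\in\mc S\mid FV=\overline 1\}$, which is harmless and if anything more transparently open.
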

\begin{proof}
    Let $f$ be a morphism from $X$ to $\Sigma$ that is tracked by $U\in\mc S$. Set
    \[
        X' = \{x\in X \mid f(x) = 1\} \quad\text{and}\quad \mc U = \{V\in\mc S \mid
            UV = \overline 1\}.
    \]
    We show that $\mc U$ is open.
    Let $\mathcal Q \coloneqq \{p\subseteq\mc S \mid p \text{ is finite and } Up
        = \overline 1\}$. Recall the notation $\upset p = \{V \in \mc S \mid p
    \subseteq V\}$. By continuity of the application, one can show that $\mc U =
    \bigcup_{p \in \mathcal Q} \upset p$. Thus, $\mc U$ is an open of $\mc S$.
    
    
    From the definition of $\mc U$ and the fact that $U$ tracks $f$, it is
    immediate that $\mc U$ has the desired properties.

    For the converse, assume we are given an open $\mc U\subseteq \mc S$ and a
    subset $X'\subseteq |X|$ with the properties stated. Define $f\colon X \to
    \Sigma$ by $f(x) = 1$ if $x\in X'$ and $f(x) = 0$ if $x\not\in X'$. We
    claim that it is tracked by $\graph(F)$ where $F(U) = \{1 \mid U \in \mc U\}$.
    That this $F$ is continuous follows from the assumption that $\mc U$ is open.
    Now if $x\in X$ and $U\in E_X(x)$, then either $f(x) = 1$, in which case
    $E_X(x) \subseteq \mc U$, so that $F(U) = \overline 1\in E_\Sigma(f(x))$; or
    $f(x) = 0$, in which case $E(x) \cap \mc U = \emptyset$, so that $F(U) =
    \emptyset\in E_\Sigma(f(x))$. So $f$ is tracked, as desired.

    That the operations above are each other's inverses is readily verified. The
    final claim follows immediately from the construction above and the
    description of pullbacks in the category of assemblies.
\end{proof}

The referee has suggested that 
Proposition~\ref{characterizationsigmasubassemblies} gives a notion of {\em synthetic topology}
on $\msf{RT}(\mc{S})$ in the sense of \cite{EscardoM:tophil}. However, observe the caveat after Proposition~\ref{Omega'notclosedunderor}.

For any dominance $\Sigma\subset\Omega$ in a topos, one has for every object $X$ the partial map classifier for $\Sigma$-subobjects: that is, a $\Sigma$-subobject $\eta _X:X\to L(X)$ which classifies partial maps into $X$ defined on a $\Sigma$-subobject. It follows that $L$ is a functor (generally  dubbed the ``lift functor'') and $\eta$ is a natural transformation from the identity functor to $L$. Now Propositions \ref{omega'def} and \ref{omega'isomorphictosigma} together imply that the functor $L$ restricts to an endofunctor on the category of assemblies. The following proposition describes this restriction explicitly.

\begin{prop}\label{liftfunctor}
    The lift functor $L$ on ${\sf Ass}({\mc S})$ is given by on objects by:
    \[
        L(X) = \left(|X|\cup\{\bot_X\},E_{L(X)}\right),
    \]
    where $\bot_X$ is some element not in $|X|$ and
    \[
        E_{L(X)}(\bot_X) = \{\emptyset\}\quad\text{and}\quad E_{L(X)}(x) =
        \{[U,\overline 1] \mid U \in E_X(x)\} \text{ for $x \in |X|$}.
    \]
    Given an arrow $f\colon X \to Y$, we define $L(f)$ as the unique extension
    of $f$ satisfying $\bot_X \mapsto \bot_Y$.
    The natural transformation $\eta \colon \id_{\msf{Ass}(\mc S)} \to L$ is
    defined as $\eta_X(x) = x$.
\end{prop}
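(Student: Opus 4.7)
The plan is to verify directly that the data $(L(X),\eta_X)$ above form a partial-map classifier for $\Sigma$-subobjects on $\msf{Ass}(\mc S)$; functoriality and the formula for $L$ on morphisms will then follow formally from the universal property.

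First, $\eta_X$ is a mono (it is injective and tracked by $\lambda u.\,\msf p u\overline 1$), and it is a $\Sigma$-subobject of $L(X)$ by Proposition~\ref{characterizationsigmasubassemblies}, witnessed by the open $\mc U = \{V\in\mc S \mid V\neq\emptyset\} = \bigcup_{n\in\mb N}\mc U_{\{n\}}$: every realizer $\msf p U\overline 1$ of an element of $|X|$ is nonempty (it contains $3$), while the unique realizer $\emptyset$ of $\bot_X$ is not.

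Next I would verify the universal property. Given a $\Sigma$-subobject $j\colon Z'\hookrightarrow Z$ presented by an open $\mc V\subseteq\mc S$ (via Proposition~\ref{characterizationsigmasubassemblies}) and a morphism $f_0\colon Z'\to X$ tracked by some $H\in\mc S$, the candidate classifier is $\tilde f\colon Z\to L(X)$ sending $z\in Z'$ to $f_0(z)$ and $z\notin Z'$ to $\bot_X$. I would track it by (the graph of) the continuous function
\[
    T(v) \;=\; M\bigl(\chi_{\mc V}(v),\,\msf p(Hv)\overline 1\bigr),
\]
where $\chi_{\mc V}\colon\mc S\to\mc S$ sends $v$ to $\overline 1$ if $v\in\mc V$ and to $\emptyset$ otherwise (continuous by openness and upward closure of $\mc V$), and $M(W_1,W_2)=\{n\mid 1\in W_1\text{ and }n\in W_2\}$ is the evident continuous ``multiplication by a truth value''. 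For $v\in E_Z(z)$ with $z\in Z'$ one has $v\in\mc V$, hence $T(v)=\msf p(Hv)\overline 1\in E_{L(X)}(f_0(z))$; for $z\notin Z'$ one has $v\notin\mc V$, hence $T(v)=\emptyset\in E_{L(X)}(\bot_X)$. The pullback property of the induced square is then immediate, and uniqueness of $\tilde f$ is forced because any classifier must send $Z\setminus Z'$ to $\bot_X$ (otherwise the pullback would properly contain $Z'$) and must agree with $f_0$ on $Z'$.

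Finally, $L(f)\colon L(X)\to L(Y)$ is the unique classifier of the partial map $(\eta_X,\,\eta_Y\circ f)$, which the universal property forces to extend $f$ by $\bot_X\mapsto\bot_Y$ and which can be tracked by a variant of $T$ that first extracts the payload via $\msf p_0$; naturality of $\eta$ is then immediate. The main obstacle in the whole argument is ensuring that $T(v)$ is \emph{exactly} $\emptyset$ on realizers of elements outside $Z'$, not merely a small set; the upward closure of the open $\mc V$ is precisely what provides this, since $v\notin\mc V$ then rules out every finite approximation from entering $\mc V$ and forces $\chi_{\mc V}(v)=\emptyset$.
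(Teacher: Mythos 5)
Your proposal is correct and follows essentially the same route as the paper: present the $\Sigma$-subobject by an open $\mc V$ via Proposition~\ref{characterizationsigmasubassemblies}, extend by $\bot$, track the extension by a continuous map that tests membership in $\mc V$ and returns exactly $\emptyset$ outside it, and establish the pullback and uniqueness by the same argument; your only reorganization is to prove $X \subseteq_\Sigma L(X)$ up front (the paper's separate Lemma~\ref{liftsigmasubobject}) and then obtain $L$ on morphisms and naturality of $\eta$ formally from the universal property, where the paper instead exhibits their trackers directly. One cosmetic point: the statement codes realizers of $x \in |X|$ as tuples $[U,\overline 1]$, whereas your trackers produce $\msf p$-pairs $\msf p U\,\overline 1$; replacing these by the tuple coding (also given by a continuous operation, and likewise always nonempty) makes everything go through verbatim.
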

\begin{proof}
    Given a morphism $f\colon X \to Y$ of $\msf{Asm}(\mc S)$ tracked by $U_f\in
    \mc S$, note that $L(f)$ is tracked, as
    \[
        [V_0,V_1] \mapsto \begin{cases}
            [U_f V_0, \overline 1] &\text{if } 1 \in V_1; \\
            \emptyset &\text{else};
        \end{cases}
    \]
    is a continuous map $\mc S\to\mc S$. Verifying that $L$ is indeed a functor
    is routine. Also, note that $\eta_X$ is tracked by $\lambda {u}. [u,\overline
    1]$. That $\eta$ is natural is easily checked.

    Finally, suppose we have a morphism $f\colon U \to Y$ and $U
    \subseteq_\Sigma X$. By Proposition~\ref{characterizationsigmasubassemblies}, we may
    assume that we have $|U|\subseteq |X|$ and an open $\mc U$ such that for
    $x\in |X|$: 
    \[
        \text{if } x\in |U|, \text{then } E_X(x) \subseteq \mc U
        \quad\text{and}\quad \text{if } x\not\in |U|, \text{then } E_X(x)\cap
        \mc U = \emptyset. 
    \]
    Define $\tilde f\colon X \to L(Y)$ by
    \[
        \tilde f(x) = 
        \begin{cases}
            f(x) &\text{if } x\in |U|; \\
            \bot_Y &\text{else}.
        \end{cases}
    \]
    Note that $\tilde f$ is tracked, for if $f$ is tracked by $U_f$, then
    \[
        U \mapsto
        \begin{cases}
            U_f U &\text{if } U \in \mc U; \\
            \emptyset &\text{else}
        \end{cases}
    \]
    is a continuous map $\mc S\to\mc S$, because $\mc U$ is open.

    If we have morphisms $g\colon Z \to X$ and $h\colon Z \to Y$ such that
    $\tilde f g = \eta_Y h$, then we must have that $g(z) \in |U|$ for any $z\in
    |Z|$. Hence, $g$ factors uniquely through $(U,E_U)$. This proves that
    $\tilde f$ makes the square 
    $$\diagram U\rto^{\subseteq _{\Sigma}}\dto_f & X\dto^{\tilde{f}}\\ Y\rto_{\eta _Y} & L(Y)\enddiagram$$
    into a pullback.
    
    It remains to show that it is unique with this property. To this end,
    suppose we have $f'\colon X \to L(Y)$, such that the square 
    $$\diagram U\rto^{\subseteq _{\Sigma}}\dto_f & X\dto^{f'}\\ Y\rto_{\eta _Y} & L(Y)\enddiagram$$
    is a pullback.
    From the commutativity of the square, it follows that for $x\in |U|$ we
    must have $f'(x) = f(x) = \tilde f(x)$. Now suppose for a contradiction that
    we have $x_0\in |X|\setminus |U|$ and $f'(x_0) \in |Y|$. The universal
    property of the pullback then yields a map $U\cup\{x_0\} \to U$ such that
    the inclusion $|U|\cup\{x_0\} \to |X|$ factors through $|U|$, but this is
    impossible, as $x_0\not\in |U|$. Hence, no such $x_0$ exists and therefore,
    $f'$ and $\tilde f$ coincide.
\end{proof}
\begin{rem}
    Observe that $L(1) \cong \Sigma$ and that $\eta_1 = 1 \xrightarrow{t}
    \Sigma$.
\end{rem}

\begin{lem}\label{liftsigmasubobject}
    For any $X$, we have $X\subseteq_\Sigma L(X)$ via $\eta_X$.
\end{lem}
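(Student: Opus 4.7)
The plan is to apply Proposition~\ref{characterizationsigmasubassemblies} directly to the inclusion $|X| \hookrightarrow |L(X)|$ and then verify that the induced subassembly agrees with $X$ via $\eta_X$. Concretely, I would take $X' = |X| \subseteq |L(X)| = |X| \cup \{\bot_X\}$ and look for an open $\mc U \subseteq \mc S$ that contains every realizer of an element of $|X|$ inside $L(X)$, but avoids the unique realizer $\emptyset$ of $\bot_X$.

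To find such a $\mc U$, I would exploit the projection combinators provided by the coding-of-tuples apparatus of the preliminaries. Let $s_1 \in \mc S$ be a combinator with $s_1[U,V] = V$, and set
\[
    \mc U = \{V \in \mc S \mid 1 \in s_1 V\}.
\]
Since application in $\mc S$ is continuous, $\mc U$ is the preimage under $V \mapsto s_1 V$ of the basic open $\{W \in \mc S \mid 1 \in W\}$, hence open. For any $x \in |X|$ we have $E_{L(X)}(x) = \{[U,\overline 1] \mid U \in E_X(x)\} \subseteq \mc U$ because $s_1[U,\overline 1] = \overline 1$; whereas $E_{L(X)}(\bot_X) = \{\emptyset\}$ is disjoint from $\mc U$ because $s_1 \emptyset = \emptyset$. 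By Proposition~\ref{characterizationsigmasubassemblies}, the subassembly $(|X|, E_{L(X)}|_{|X|})$ is therefore a $\Sigma$-subobject of $L(X)$.

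It remains to check that $\eta_X$ realises $X$ as exactly this $\Sigma$-subobject, i.e.\ that $\eta_X$ is an isomorphism onto $(|X|, E_{L(X)}|_{|X|})$. On underlying sets $\eta_X$ is the identity inclusion; as a morphism into the subassembly it is tracked by $\lambda u.[u,\overline 1]$, and its set-theoretic inverse is tracked by the first projection $s_0$, since $s_0[U,\overline 1] = U \in E_X(x)$. I do not anticipate any real obstacle here: once the open $\mc U$ has been identified the rest is a direct unwinding of Proposition~\ref{characterizationsigmasubassemblies} together with the explicit description of $L(X)$ from Proposition~\ref{liftfunctor}.
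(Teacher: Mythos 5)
Your proof is correct and follows essentially the same route as the paper: the paper exhibits the classifying map $\chi_X\colon L(X)\to\Sigma$ (with $\chi_X(x)=1$, $\chi_X(\bot_X)=0$) tracked by the graph of $V\mapsto\overline 1$ if $1\in\msf{p}_1V$ and $\emptyset$ otherwise, which is precisely your open $\mc U=\{V\mid 1\in s_1V\}$ repackaged via Proposition~\ref{characterizationsigmasubassemblies}. The one claim you leave unjustified, $s_1\emptyset=\emptyset$, does follow immediately from monotonicity of application together with exactness of the projection, since $s_1\emptyset\subseteq s_1[\emptyset,\emptyset]=\emptyset$.
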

\begin{proof}
    It is straightforward to verify that
    \[
        \begin{tikzcd}
            X \arrow[r] \arrow[d, "\eta_X"] & 1 \arrow[d, "t"] \\
            L(X) \arrow[r,"\chi_X"] & \Sigma
        \end{tikzcd}
    \]
    with $\chi_X(x) = 1$ and $\chi_X(\bot_X) = 0$ is a pullback. (Note the map
    $F(V) = \overline 1$ if $1 \in \mathsf {p_1} V$ and $\emptyset$
    otherwise is continuous and its graph tracks $\chi_X$.)
\end{proof}

As always, the lift functor $L$ is a monad on
${\sf Ass}({\mc S})$. The multiplication $\mu\colon L^2(X) \to L(X)$ is given by
the map $x\mapsto x, \bot_X \mapsto \bot_X, \bot_{L(X)} \mapsto \bot_X$.

\section{A notion of homotopy for the order-discrete objects}
In \cite{OostenJ:nothet}, a notion of homotopy for the Effective Topos is given. This notion is tied up with the subcategory of discrete objects, in the following way: internally, for each object $X$, the ``discrete reflection'' of $X$ is the set of path components of $X$. It is just natural to wonder whether this would also work for the order-discrete objects in ${\sf RT}({\mc S})$.

Another interpretation of homotopy for the Effective Topos is in the paper \cite{FruminD:homtmf} by Dan Frumin and Benno van den Berg. They build a very elegant framework on purely category-theoretic notions. The basic ingredients are: an interval object $\mb{I}$ with two distinct points $\partial _0,\partial _1: 1\to\mb{I}$ , binary maps $\wedge ,\vee :I\times I\to I$ satisfying some axioms, and a dominance $\Sigma$ which is closed under finite unions and satisfies the condition that the inclusion $[\partial _0,\partial _1]:1+1\to\mb{I}$ is a $\Sigma$-subobject. From these notions, a subcategory of ``fibrant objects'' is defined, and there is a Quillen model structure on this subcategory. When applying to the Effective Topos they take the trivial dominance: $\Omega$ itself.

In ${\sf RT}({\mc S})$ we have an interesting dominance: the Sierpi\'nski object. However, the Frumin-van den Berg framework is not applicable for this dominance, as the following proposition shows.

\begin{prop}\label{Omega'notclosedunderor} The Sierpi\'nski object, as subobject of $\Omega$, is not closed under finite unions.\end{prop}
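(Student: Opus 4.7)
The plan is to argue by contradiction, using the monotonicity of continuous maps on the Scott domain $\mc{P}(\mb{N})$. Suppose $\Sigma$ were closed under binary unions. By Proposition \ref{omega'def} and the isomorphism $\Sigma\cong\Omega'$ (Proposition \ref{omega'isomorphictosigma}), this would supply a uniformly realised implication: from $X, Y \in \mc{S}$, regarded as the Sierpi\'nski propositions $1\in X$ and $1\in Y$, one could continuously produce some $Z \in \mc{S}$ together with a realiser of the biconditional $(1\in X)\lor(1\in Y) \leftrightarrow (1\in Z)$. In particular, I would extract a continuous map $R\colon \mc{S}^2 \to \mc{S}$ realising the backward implication $(1\in Z) \to (1\in X)\lor(1\in Y)$, using that the canonical realiser of $1\in Z$ whenever it holds is $\msf{p}\overline{1}Z$ by Remark \ref{inisnotnotstable}.

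The next ingredient is the canonical form of realisers of a disjunction in $\msf{RT}(\mc{S})$: a realiser of $\phi\lor\psi$ has the form $\msf{p}\,\msf{t}\,r$ with $r$ realising $\phi$, or $\msf{p}\,\msf{f}\,r$ with $r$ realising $\psi$. Thus the first projection $\msf{p}_0$ of the output of $R$ must equal $\msf{t}=\{1\}$ or $\msf{f}=\{0\}$ depending on which disjunct is selected. At $(X,Y)=(\{1\},\emptyset)$, only the first disjunct holds, forcing the tag to $\msf{t}$; at $(X,Y)=(\emptyset,\{1\})$, only the second holds, forcing the tag to $\msf{f}$; and at $(X,Y)=(\{1\},\{1\})$ the tag must be one or the other. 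But $\msf{p}_0\circ R$ is assembled from continuous operations on $\mc{S}$ and is therefore monotone, whence $\msf{p}_0(R(\{1\},\{1\}))$ contains both $\msf{p}_0(R(\{1\},\emptyset))=\{1\}$ and $\msf{p}_0(R(\emptyset,\{1\}))=\{0\}$, hence contains $\{0,1\}$; this is neither $\{0\}$ nor $\{1\}$, which yields the desired contradiction.

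The main obstacle is bookkeeping: one has to extract from the assumed internal realiser the concrete continuous maps (the witness $Z$ as a function of $X,Y$ and the realiser $R$ of the backward implication) and pin down precisely the convention for disjunction realisers in $\msf{RT}(\mc{S})$, so that the tag is indeed forced to be $\msf{t}$ or $\msf{f}$. Once the setup is in place, the three-point monotonicity argument on the Scott domain is straightforward.
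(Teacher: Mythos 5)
Your proposal is correct and follows essentially the same route as the paper: both unfold closure under binary unions (via $\Omega'\cong\Sigma$) into the realizability of $\forall XY\exists Z(1\in Z\leftrightarrow 1\in X\vee 1\in Y)$ and refute it by monotonicity of continuous maps on $\mc S$ at the three test points $(\{1\},\emptyset)$, $(\emptyset,\{1\})$, $(\{1\},\{1\})$, using that the disjunction tag must be exactly one of two incomparable singletons. The only cosmetic difference is that the paper case-splits on the tag at $(\{1\},\{1\})$ while you combine both inclusions to show that tag contains $\{0,1\}$; the content is identical.
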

\begin{proof} We apply the characterization \ref{omega'isomorphictosigma}. Let the variables $X,Y,Z$ range over the object of realizers $S$ (which in turn is seen as subobject of ${\mc P}(N)$). We prove that the statement
  \begin{equation}
  \forall XY\exists Z(1\in Z\leftrightarrow 1\in X\vee 1\in Y)
  \tag{$\ast$}
  \end{equation}
is not valid in ${\sf RT}({\mc S})$. Indeed, working out the realizability of $(\ast )$, we see that we need continuous functions $F$ and $G$ on $\mc S$ such that for all $U,V\in {\mc S}$ we have:\begin{itemize}
\item If $1\in U$ or $1\in V$ then $1\in F(U,V)$.
\item If $1\in F(U,V)$ then either $G(F(U,V))=\overline{0}$ and $1\in U$, or $G(F(U,V))=\overline{1}$ and $1\in V$.\end{itemize}
However, this is impossible: consider $G(F(\{ 1\} ,\{ 1\} ))$. If this is equal to $\overline{0}$ then by continuity we cannot have $G(F(\emptyset ,\{ 1\} ))=\overline{1}$; the other case is similar.\end{proof}

Caveat: note that if we interpret Escard\'o's ``synthetic topology'' in ${\sf RT}({\mc S})$, we see from Proposition~\ref{Omega'notclosedunderor} that this topology will not be standard, but ``substandard'' in the terminology of \cite{EscardoM:tophil}.

In view of Proposition~\ref{Omega'notclosedunderor}, we revert to the more `hands-on' approach of \cite{OostenJ:nothet}. The main extra complication is the essential asymmetry of the object $\Sigma$. In the effective topos case, we had for every natural number $n$ a `generic interval' $I_n$ (which essentially is an iterated pushout of copies of $\nabla (2)$). Here, we have more generic intervals, indexed by 01-sequences.
\begin{defi}\label{genericintervaldef} For each natural number $n\geq 1$ and each 01-sequence $\sigma =(\sigma _0,\ldots ,\sigma _{n-1})$ we define the assembly $I_{n,\sigma}=(\{ 0,\ldots ,n\} ,E_{n,\sigma})$ where:
$$\begin{array}{lcl}
E_{n,\sigma}(0) & = & \left\{\begin{array}{cl} \{ {\sf p}\overline{0}\,\overline{1},{\sf p}\overline{1}\emptyset\} & \text{if }\sigma _0=0;\\
\{ {\sf p}\overline{0}\,\overline{1},{\sf p}\overline{1}\,\overline{1}\} & \text{if }\sigma _0=1; \end{array}\right. 
\\
E_{n,\sigma}(k+1)\text{ (for $k+1<n$)} & = & \left\{\begin{array}{cl} \{ {\sf p}\overline{k}\,\overline{1},{\sf p}\overline{k+1}\emptyset\} & \text{if }\sigma _k=0,\sigma _{k+1}=0;\\
\{ {\sf p}\overline{k}\,\overline{1},{\sf p}\overline{k+1}\,\overline{1}\} & \text{if }\sigma _k=0,\sigma _{k+1}=1;\\
\{ {\sf p}\overline{k}\emptyset ,{\sf p}\overline{k+1}\emptyset\} & \text{if }\sigma _k=1,\sigma _{k+1}=0;\\
\{ {\sf p}\overline{k}\emptyset ,{\sf p}\overline{k+1}\,\overline{1}\} & \text{if }\sigma _k=1,\sigma _{k+1}=1;
\end{array}\right.
\\
E_{n,\sigma}(n) & = & \left\{\begin{array}{cl} \{ {\sf p}\overline{n}\,\overline{1},{\sf p}\overline{n+1}\emptyset\} & \text{if }\sigma _{n-1}=0;\\
\{ {\sf p}\overline{n}\emptyset ,{\sf p}\overline{n+1}\emptyset\} & \text{if }\sigma _{n-1}=1.\end{array}\right.
\end{array}$$
So, each set $E_{n,\sigma}(k)$ consists of two elements $\alpha _k$, $\beta _k$ which satisfy the following conditions:\begin{itemize}
\item ${\sf p}_0\alpha _k=\overline{k}$, ${\sf p}_0\beta _k=\overline{k+1}$;
\item $\beta _k\subset\alpha _{k+1}$ if $\sigma _k=0$;
\item $\beta _k\supset\alpha _{k+1}$ if $\sigma _k=1$.\end{itemize}
The object $I_{n,\sigma}$ is called a {\em generic interval of length $n$}.
 \end{defi}
\begin{prop}\label{exponentInsigma} Let $(X,\sim )$ be an arbitrary object of ${\sf RT}({\mc S})$ and let $I_{n,\sigma}$ be a generic interval of length $n$. The exponent $(X,\sim )^{I_{n,\sigma}}$ can be described as follows: its underlying set is the set of $2n+2$-tuples $(\vec{x},\vec{y})=(x_0,y_0,\ldots ,x_n,y_n)$ of elements of $X$. Given such a tuple, we define $E(\vec{x},\vec{y})$ as the set of coded $3n-2$-tuples (coded in the sense of $\mc S$)
$$[U_0,V_0,W_0,\ldots ,U_{n-1},V_{n-1},W_{n-1},U_n]$$
of elements of $\mc S$ such that we have:\begin{itemize}
\item $U_k\in [x_k\sim y_k]$ for $0\leq k\leq n$;
\item $V_k\in [y_k\sim y_k]$, $W_k\in [x_{k+1}\sim x_{k+1}]$ for $0\leq k<n$;
\item $V_k\subseteq W_k$ if $\sigma _k=0$, and $V_k\supseteq W_k$ if $\sigma _k=1$.\end{itemize}
For two such tuples $(\vec{x},\vec{y}),(\vec{x'},\vec{y'})$, the set $[(\vec{x},\vec{y})\approx (\vec{x'},\vec{y'})]$ is the set of coded triples (in the sense of $\mc S$) $[\alpha,\beta ,\gamma ]$ of elements of $\mc S$, which satisfy:\begin{itemize}
\item $\alpha\in E(\vec{x},\vec{y})$;
\item $\beta\in E(\vec{x'},\vec{y'})$;
\item $\beta\overline{k}\in [x_k\sim x'_k]$ for $0\leq k\leq n$.\end{itemize}
\end{prop}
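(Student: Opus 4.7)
The plan is to apply Proposition~\ref{assexponent} with $I_{n,\sigma}$ playing the role of the assembly exponent base, and then to simplify the resulting description. The underlying set drops out at once: since each realizing set $E_{n,\sigma}(k)$ has exactly two elements $\alpha_k$ and $\beta_k$, a function $\{(k,U) \mid U \in E_{n,\sigma}(k)\} \to X$ is parametrized by a $(2n+2)$-tuple $(x_0,y_0,\ldots,x_n,y_n)$ via $x_k = f(k,\alpha_k)$ and $y_k = f(k,\beta_k)$.

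The core work is to re-encode each coded triple $[P,Q,R]$ produced by Proposition~\ref{assexponent} into the economical tuple form $[U_0, V_0, W_0, \ldots, U_n]$. In one direction, from such a triple one extracts $U_k := Q\alpha_k\beta_k \in [x_k \sim y_k]$ for $0 \leq k \leq n$, together with $V_k := P\beta_k \in [y_k \sim y_k]$ and $W_k := P\alpha_{k+1} \in [x_{k+1} \sim x_{k+1}]$ for $0 \leq k < n$. Since application in $\mc S$ is continuous and every continuous function on $\mc S$ preserves $\subseteq$, the defining inclusion $\beta_k \subseteq \alpha_{k+1}$ when $\sigma_k = 0$ (respectively $\beta_k \supseteq \alpha_{k+1}$ when $\sigma_k = 1$) forces the corresponding inclusion between $V_k$ and $W_k$.

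I expect the main obstacle to lie in the reverse direction: reconstructing continuous functions underlying $P$, $Q$ and the cross-realizer $R$ from a tuple $[U_0, V_0, W_0, \ldots, U_n]$ satisfying the listed conditions. The idea is that the realizers $\alpha_k, \beta_k$ are effectively distinguishable by their $\msf p_0$-projections (which return $\overline k$ and $\overline{k+1}$ respectively) and by whether their $\msf p_1$-part contains $\overline 1$ (which encodes $\sigma_k$), so a continuous function interpolating the prescribed values on these specific inputs can be defined by cases; the monotonicity constraints $V_k \subseteq W_k$ (or the reverse) are precisely what is needed to make such an interpolation continuous. The reflexivity witnesses $P\alpha_0$ and $P\beta_n$ that do not appear in the tuple can be synthesized from $U_0$ and $U_n$ using the reflexivity and symmetry combinators of $\sim$. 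The description of $[(\vec{x}, \vec{y}) \approx (\vec{x'}, \vec{y'})]$ then follows from the same re-encoding applied to Proposition~\ref{assexponent}'s triple: $\alpha$ and $\beta$ package the self-reflexivity data for each side, while the third component records the pointwise cross-equivalence realizers linking the two tuples.
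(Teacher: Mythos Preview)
Your approach is correct and matches the paper's: the paper's own proof is essentially ``left to the reader'' with the instruction to apply Proposition~\ref{assexponent}, which is exactly what you do, and your identification of the underlying set and the forward extraction of $U_k,V_k,W_k$ from $[P,Q,R]$ is right.

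There is one point worth flagging, since it is the only substantive remark the paper makes and your treatment differs from it. In the reverse direction you propose to reconstruct $P$ (and $Q$) by case-splitting on the ${\sf p}_1$-component of the input, using the fact that $\sigma$ is fixed and known. That is fine for this proposition. The paper, however, observes that the tuples $[U_0,V_0,W_0,\ldots ,U_n]$ themselves carry \emph{no} information about $\sigma$, and stresses that this does not matter: for any $V,W\in\mc S$ there is a continuous map on $\mc S$ sending ${\sf p}\overline{k}\emptyset\mapsto V\cap W$ and ${\sf p}\overline{k}\,\overline{1}\mapsto V\cup W$, so one only needs to know that \emph{some} inclusion between $V_k$ and $W_k$ holds, not its direction. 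This $\sigma$-independent reconstruction is what makes the subsequent path-object construction (Definition~\ref{pathobjectdef}), where one quantifies existentially over $\sigma$, go through uniformly. Your $\sigma$-dependent case analysis proves the present proposition, but you may want to adopt the $\cap/\cup$ trick when you move on to Theorem~\ref{Pinternalcat}.
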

\begin{proof} This is left to the reader, who may find it useful to review Proposition~\ref{assexponent}. It should be noted that realizers of elements of $(X,\sim )^{I_{n,\sigma}}$ don't contain information about $\sigma$. This is not necessary, since for any $V,W\in {\mc S}$ we have a continuous endomap on $\mc S$ which sends ${\sf p}\overline{k}\emptyset$ to $V\cap W$ and ${\sf p}\overline{k}\,\overline{1}$ to $V\cup W$, so it is enough to know that some inclusion between $U_k,V_k$ exists. Note also that the number $n$ can be retrieved from any element of $E(\vec{x},\vec{y})$.\end{proof}
\begin{defi}\label{oepp} As in \cite{OostenJ:nothet}, we say that an arrow $I_{n,\sigma}\to I_{m,\tau}$ is {\em order and end-point preserving\/} if its underlying function: $\{0,\ldots ,n\}\to\{ 0,\ldots ,m\}$ is so (with respect to the usual order).\end{defi}
\begin{defi}\label{pathobjectdef} Let $(X,\sim )$ be an object of ${\sf RT}({\mc S})$. Its {\em path object\/} ${\sf P}(X,\sim )$ defined as follows. Its underlying set is the set of all tuples $(x_0,y_0,\ldots ,x_n,y_n)$, which we write as $(\phi ,n)$. For two such tuples $(\phi ,n)$ and $(\psi ,m)$ we let $[(\phi ,n)\sim (\psi ,m)]$ be the set of coded triples (in the sense of $\mc S$) $[\alpha ,s,\beta ]$ of elements of $\mc S$ such that for suitable 01-sequences $\sigma ,\tau$ we have:\begin{enumerate}
\item $\alpha\in E(\phi ,n)$ as element of $(X,\sim )^{I_{n,\sigma}}$;
\item $\beta\in E(\psi ,m)$ as element of $(X,\sim )^{I_{m,\tau}}$;
\item There are order and endpoint preserving maps $I_{p,\rho}\stackrel{f}{\to}I_{n,\sigma}$, $I_{p,\rho}\stackrel{g}{\to}I_{m,\tau}$ such that $s\in [\phi f\approx\psi g]$ in the sense of $(X,\sim )^{I_{p,\rho}}$.\end{enumerate}\end{defi}
\begin{rem} The third clause corrects an inaccuracy in Definition 2.5 of \cite{OostenJ:nothet}, which was noted in \cite{FruminD:homtmf}.\end{rem}
With this definition of path object we can now mimick \cite{OostenJ:nothet} and prove the following theorem:
\begin{thm}\label{Pinternalcat} \hfill\begin{enumerate}
\item The construction of ${\sf P}(X,\sim )$ extends to an endofunctor on ${\sf RT}({\mc S})$, which preserves finite limits.
\item The object ${\sf P}(X,\sim)$ comes equipped with well-defined maps:\begin{itemize}
\item[] ${\sf s}$ (source), $\sf t$ (target): ${\sf P}(X,\sim )\to (X,\sim )$;
\item[] $\sf c$ (constant path): $(X,\sim )\to {\sf P}(X,\sim )$;
\item[] $\ast$ (composition of paths): ${\sf P}(X,\sim )\times _{(X,\sim )}{\sf P}(X,\sim )\to {\sf P}(X,\sim )$ where the domain is the pullback $\{ (x,y)\, |\, {\sf t}(x)={\sf s}(y)\}$;
\item[] $\widetilde{(-)}$ (converse path): ${\sf P}(X,\sim )\to {\sf P}(X,\sim )$.\end{itemize}
Moreover, with these data the object ${\sf P}(X,\sim )$ has the structure of an internal category in ${\sf RT}({\mc S})$ with a contravariant involution which is the identity on objects.\end{enumerate}\end{thm}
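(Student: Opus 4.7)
The approach is to follow the blueprint of \cite{OostenJ:nothet} closely, making careful adjustments to accommodate the asymmetry of $\Sigma$ (which already manifests in the fact that generic intervals are indexed by $01$-sequences rather than just natural numbers). Throughout, the slogan is that all constructions act componentwise on tuples, and that inclusion conditions are preserved because every tracker is continuous and hence inclusion-preserving.

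For (1), functoriality is defined on a morphism $f\colon(X,\sim)\to(Y,\sim')$ (represented by a functional relation tracked by some $T_f$) by sending a tuple $(\phi,n)=(x_0,y_0,\ldots,x_n,y_n)$ to the componentwise image, with the realizer transformation replacing every entry $U_k,V_k,W_k$ of a coded $(3n+1)$-tuple by the result of applying $T_f$. Since $T_f$ is continuous, the alternating inclusion constraints $V_k\subseteq W_k$ or $V_k\supseteq W_k$ transport to the image, and functoriality is a routine check. Preservation of finite limits follows by building the canonical comparison maps coordinatewise: $\mathsf{P}(1)\cong 1$ is clear, and for $\mathsf{P}(X\times Y)\cong\mathsf{P}(X)\times\mathsf{P}(Y)$ and $\mathsf{P}(\mathrm{eq}(f,g))\cong\mathrm{eq}(\mathsf{P}(f),\mathsf{P}(g))$ one pairs/splits the tuples entry by entry, using Proposition~\ref{exponentInsigma} to read off realizers.

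For (2), source and target act on tuples by $\mathsf{s}(\phi,n)=x_0$ and $\mathsf{t}(\phi,n)=y_n$; their trackers extract $U_0$ (respectively $U_n$) from a coded realizer and, using realizers of symmetry and transitivity of $\sim$, produce elements of $[x_0\sim x_0]$ and $[y_n\sim y_n]$. The constant path $\mathsf{c}(x)$ is the length-zero tuple $(x,x)$, tracked by pairing a representative of $x$ with itself. Composition $\ast$, defined on the pullback $\{((\phi,n),(\psi,m))\mid y_n=x_0'\}$, concatenates the two tuples into a $(2(n+m)+2)$-tuple, and the corresponding concatenation of realizers yields a $(3(n+m)+1)$-coded tuple; the $01$-sequence of the resulting generic interval is witnessed by the realizers themselves (and by Proposition~\ref{exponentInsigma} need not be recorded explicitly). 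The converse $\widetilde{(\phi,n)}$ reverses the tuple, implicitly flipping and complementing the underlying $01$-sequence.

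For the internal category structure, associativity of $\ast$ and the involution laws $\widetilde{\widetilde\gamma}=\gamma$, $\widetilde{\gamma\ast\delta}=\widetilde\delta\ast\widetilde\gamma$ are strict on the underlying tuples and admit explicit trackers. The identity laws $\mathsf{c}(\mathsf{s}(\gamma))\ast\gamma\sim\gamma$ and $\gamma\ast\mathsf{c}(\mathsf{t}(\gamma))\sim\gamma$, as well as the source/target equations, must be verified modulo the equivalence relation on $\mathsf{P}(X,\sim)$. This is where clause (3) of Definition~\ref{pathobjectdef} is used crucially: one produces an order and endpoint preserving reparametrization between appropriate generic intervals that collapses the inserted degenerate segment, and this reparametrization supplies the required element of $[\phi f\approx\psi g]$. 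I expect the main obstacle to be precisely these reparametrization arguments: constructing, for each instance of the category laws, a suitable pair of order and endpoint preserving maps $I_{p,\rho}\rightrightarrows I_{n,\sigma}$, $I_{m,\tau}$ and a realizer in $[\phi f\approx\psi g]$, and then also verifying that composition and converse are well-defined on $\sim$-classes by a similar reparametrization argument (which, given the richness of the equivalence generated by clause (3), works by essentially the same recipe as in \cite{OostenJ:nothet}, with the extra bookkeeping that the asymmetry of $\Sigma$ forces on the $01$-sequences).
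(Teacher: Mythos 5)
Your overall route is exactly the paper's route: the paper offers no more detail than ``mimick \cite{OostenJ:nothet}'', and your outline -- componentwise action of $\mathsf{P}$ on (functional relations representing) morphisms, monotonicity of application to transport the inclusion constraints, coordinatewise comparison maps for finite limits, and the identification of the reparametrization arguments via clause (3) of Definition~\ref{pathobjectdef} as the real work for the unit laws and for well-definedness of $\ast$ and $\widetilde{(-)}$ -- is the right one. However, two of your explicit constructions do not work as literally written and need repair. First, the constant path: generic intervals $I_{n,\sigma}$ exist only for $n\geq 1$, so the length-zero tuple $(x,x)$ has empty existence in $\mathsf{P}(X,\sim)$ and your $\mathsf{c}$ would not be total; take instead $\mathsf{c}(x)=(x,x,x,x)$ with $n=1$, realized by $[U,U,U,U]$ for $U\in[x\sim x]$.

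Second, and more substantively, composition. The domain of $\ast$ is the internal pullback, so from a composable pair one only gets a realizer $C\in[y_n\sim x_0']$, not literal equality $y_n=x_0'$; and plain concatenation of the realizer tuples is not an element of $E$ of the concatenated path, because at the seam one must supply $V\in[y_n\sim y_n]$ and $W\in[x_0'\sim x_0']$ with $V\subseteq W$ or $W\subseteq V$, and no such inclusion can in general be manufactured from $C$ (realizers of two distinct elements need not be comparable) -- this is precisely where the asymmetry you mention could bite, and your sketch is silent on it. There are two standard fixes, either of which completes your argument: (i) insert an extra linking segment $(y_n,x_0')$ realized by $C$, flanked on both sides by seam pairs taken \emph{equal} (some $A\in[y_n\sim y_n]$ and some $B\in[x_0'\sim x_0']$, obtained from $U_n$, $U_0'$ by symmetry and transitivity), giving a composite of length $n+m+2$; or (ii) first observe that the pullback is isomorphic to its subobject of pairs with $y_n=x_0'$ literally (every internally composable pair is, uniformly in the realizers, internally equal to such a pair), after which concatenation works with seam realizers $V_n=W_n\in[y_n\sim y_n]$ and the composite has length $n+m+1$. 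In either case your tuple counts ($2(n+m)+2$ and $3(n+m)+1$) are off by one segment. These are local repairs rather than a change of approach, but without one of them the step ``concatenation of realizers yields a valid coded tuple'' fails.
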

\begin{prop}\label{odrefl=pathcomp} The set of path components of $(X,\sim )$, i.e.\ the quotient of $(X,\sim )$ by the equivalence relation ``there is a path from $x$ to $y$'', is exactly the order-discrete reflection of $(X,\sim )$.\end{prop}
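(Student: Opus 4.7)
The approach is to identify the coequalizer of ${\sf s},{\sf t}:{\sf P}(X,\sim)\rightrightarrows (X,\sim)$ with the concrete order-discrete reflection of $(X,\sim)$ as spelled out in Remark~\ref{odreflection}. The key observation is that the realizer data of a path from $x$ to $y$ is, up to notation, exactly the realizer data of $x\approx y$ in the reflection.

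First I would unpack what a path from $x$ to $y$ in ${\sf P}(X,\sim)$ amounts to. By Definition~\ref{pathobjectdef} together with Proposition~\ref{exponentInsigma}, such a path is a tuple $(x_0,y_0,\ldots,x_n,y_n)$ with $x_0=x$ and $y_n=y$, realized by a coded $(3n{+}1)$-tuple $[U_0,V_0,W_0,\ldots,U_{n-1},V_{n-1},W_{n-1},U_n]$ subject to the conditions of Proposition~\ref{exponentInsigma}; in particular, for each $k<n$, either $V_k\subseteq W_k$ or $V_k\supseteq W_k$. As noted after Proposition~\ref{exponentInsigma}, the 01-sequence $\sigma$ is not recorded in the realizer. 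Comparing this directly with Remark~\ref{odreflection} shows that the realizers of $[x\approx y]$ in the order-discrete reflection are \emph{exactly} the realizers of paths from $x$ to $y$.

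By Theorem~\ref{Pinternalcat}, ${\sf P}(X,\sim)$ is an internal category with a contravariant involution, so the pseudo-equivalence relation ``there is a path from $x$ to $y$'' is already reflexive (via $\sf c$), symmetric (via $\widetilde{(-)}$) and transitive (via $\ast$) in the internal logic. Consequently the coequalizer of $\sf s,\sf t$ is presented on the same underlying set $X$ with realizer relation $[x\approx y]$ equal to the set of path-codes from $x$ to $y$, which by the previous paragraph is the object of Remark~\ref{odreflection}. It then remains to verify both universal properties. That $(X,\approx)$ is order-discrete follows from Proposition~\ref{characterizationorder-discrete}: given $U\in[x\approx x]$ and $V\in[y\approx y]$ with $U\subseteq V$, one concatenates the two witnessing chains and appends a single link whose $V_{n-1},W_{n-1}$ pair is precisely $U,V$. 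Conversely, given an order-discrete $Z$, a morphism $f:(X,\sim)\to Z$ with tracker $g\in\mc S$, and a path realizer $[U_0,V_0,W_0,\ldots,U_n]$, one uses that every element of $\mc S$ acts monotonically on $\subseteq$ (so $V_k\subseteq W_k$ implies $gV_k\subseteq gW_k$, and similarly in the opposite inclusion) to apply the characterizing element $A\in\mc S$ of $Z$ from Proposition~\ref{characterizationorder-discrete} at each link; interleaving with transitivity and symmetry realizers of $\sim_Z$ yields a realizer of $[f(x)\sim_Z f(y)]$ uniformly in the path realizer, so $f{\sf s}=f{\sf t}$ and $f$ factors uniquely through the quotient.

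The only real obstacle is the bookkeeping in this last step: explicitly writing down the $\lambda$-term that, given a generic path realizer and $g$, produces a realizer of $f(x)\sim_Z f(y)$. This is mechanical once monotonicity of $\mc S$-application and Proposition~\ref{characterizationorder-discrete} are invoked; the rest of the argument is an unpacking of Definition~\ref{pathobjectdef}, Proposition~\ref{exponentInsigma}, and Remark~\ref{odreflection}.
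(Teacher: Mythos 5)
Your second paragraph --- matching the coded $(3n{+}1)$-tuples of Definition~\ref{pathobjectdef}/Proposition~\ref{exponentInsigma} against those of Remark~\ref{odreflection} --- is precisely the paper's proof, which consists of exactly this inspection and takes Remark~\ref{odreflection} as the already-given concrete form of the reflection. Your additional verification of the universal property (that the path quotient is order-discrete and that maps to order-discrete objects factor through it) goes beyond what the paper records, and its overall strategy is sound, but one step in it is wrong as written. In the order-discreteness check you take $U\in[x\approx x]$ and $V\in[y\approx y]$ with $U\subseteq V$ and propose to append ``a single link whose $V_{n-1},W_{n-1}$ pair is precisely $U,V$''. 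But $U$ and $V$ are themselves coded path tuples, whereas Remark~\ref{odreflection} requires the linking pair to satisfy $V_k\in[y_k\sim y_k]$ and $W_k\in[x_{k+1}\sim x_{k+1}]$ in the \emph{original} object $(X,\sim)$; a path code is not such a realizer, so the tuple you build violates clause (3) and is not an element of $[x\approx y]$.

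The repair uses exactly the monotonicity you invoke in the next sentence: apply one and the same element of $\mc S$ to both codes, e.g.\ the map sending a path code to $t({\sf p}\,U_0\,(s\,U_0))$ where $U_0$ is its first component and $s,t$ realize symmetry and transitivity of $\sim_X$. Applied to $U$ this yields an element of $[x\sim_X x]$, applied to $V$ an element of $[y\sim_X y]$, and since application in $\mc S$ is monotone in the argument, $U\subseteq V$ gives the inclusion between the two extracted realizers; this pair is then a legitimate link joining the two concatenated chains, and the whole construction is effected by a single element $A$ as Proposition~\ref{characterizationorder-discrete} demands. A smaller point of the same kind: in the factorization step a morphism out of a general $(X,\sim)$ is a functional relation rather than a tracked function, so ``the tracker $g$'' should be replaced by the strictness/totality realizers of that relation --- but since you apply the \emph{same} realizer to both halves $V_k,W_k$ of each link, monotonicity again preserves the inclusion and your argument via $A$, symmetry and transitivity of $\sim_Z$ goes through unchanged.
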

\begin{proof} Inspection of Remark~\ref{odreflection} and Definition~\ref{pathobjectdef}.\end{proof}

Just as in \cite{OostenJ:nothet}, we see that we have a ``path object category'' in the sense of Van den Berg and Garner (\cite{vandenBergB:topsmi}).
\begin{exa}\label{topexample} For a countably based $T_0$-space, its image under the embedding in ${\sf RT}({\mc S})$ is always discrete (that is the $T_0$-property); it is order-discrete if and only if the space is $T_1$. For any $T_0$-space $X$, the {\em specialization order\/} on $X$ is the partial order defined by: $x\leq y$ if and only if every open set which contains $x$, also contains $y$ (in the notation of the introduction: $e(x)\subseteq e(y)$, if the embedding of X is $(X, E_X)$ with $E_X(x)=\{ e(x)\}$). We can see this order relation as an undirected graph on $X$. A path from $x$ to $y$ is just a path in this graph.\end{exa}

\begin{small}
\bibliographystyle{plain}
 
\end{small}

\end{document}